\newtheorem{invariant}{Invariant}
\newtheorem{observation}{Observation}
\begin{document}
\pagestyle{headings}
\titlerunning{D$^2$-Tree: A New Overlay with Deterministic Bounds}
\authorrunning{G.S.~Brodal, S.~Sioutas, K.~Tsichlas, and C.~Zaroliagis}

\newcommand{\REMOVED}[1]{}

\title{D$^2$-Tree: A New Overlay with Deterministic Bounds}

\author{Gerth St{\o}lting Brodal\inst{1}
\and
Spyros Sioutas \inst{2}
\and \\
Kostas Tsichlas \inst{3}
\and
Christos Zaroliagis \inst{4}\thanks{Part of this work was done while the author was visiting the Karlsruhe Institute of Technology.}
}

\institute{MADALGO, Department of Computer Science, University of Aarhus, Denmark, gerth@madalgo.au.dk
\and
Department of Informatics, Ionian University, Corfu Greece, sioutas@ionio.gr
\and
Department of Informatics, Aristotle University of Thessaloniki, Greece, tsichlas@csd.auth.gr
\and
Dept.~of Computer Eng.~\& Informatics, University of Patras, Greece\\
Institute of Theoretical Informatics, Karlsruhe Institute of Technology, Germany
zaro@ceid.upatras.gr}

\maketitle
\centerline{\today}


\begin{abstract}
We present a new overlay, called the {\em Deterministic Decentralized tree} ($D^2$-tree).
The $D^2$-tree compares favourably to other overlays for the following reasons:
(a) it provides matching and better complexities, which are deterministic for the supported operations;
(b) the management of nodes (peers) and elements are completely decoupled from each other; and
(c) an efficient deterministic load-balancing mechanism is presented for the uniform distribution
of elements into nodes, while at the same time probabilistic optimal bounds are provided
for the congestion of operations at the nodes. The load-balancing scheme of elements into nodes
is deterministic and general enough to be applied to other hierarchical tree-based overlays.
This load-balancing mechanism
is based on an innovative lazy weight-balancing mechanism, which is interesting in its own right.
\end{abstract}
\centerline{{\bf Keywords}: Overlay, indexing scheme, decentralized system, distributed data structure, load-balancing.}


\section{Introduction}
Decentralized systems and in particular Peer-to-Peer (P2P) networks have become very popular
of late and are widely used for sharing resources and store very large data sets. Data
are stored at the nodes (or peers) and the most crucial operations are data search
(identify the node that stores the requested information) and updates (insertions/deletions
of data). Searching and updating is typically done by building a logical \emph{overlay network}
that facilitates the assignment and indexing of data at the nodes. Sometimes, we distinguish
between the overlay structure per se and the indexing scheme used to access the data.

Following the typical modeling, a decentralized communication network
is represented by a graph. Its nodes correspond to the network nodes,
while its edges correspond to communication links.
We assume constant size messages between nodes through links and asynchronous communication.
It is assumed that the network provides an upper bound on the time needed for a node to send
a message and receive an acknowledgment. The complexity of an operation is measured in terms
of the number of messages issued during its execution. Throughout the paper, when we refer
to cost we shall mean number of messages (internal computations at nodes are considered
insignificant). The \textit{overlay} is another graph defined
over the communication network. The nodes of the overlay correspond to nodes of the original network,
while its edges (links) 
may not correspond to existing communication links, but to communication paths.

With respect to its \textit{structure}, the overlay supports the operations \emph{Join} (of a new node
$v$; $v$ communicates with an existing node $u$ in order to be inserted into the overlay), and
\emph{Departure} (of an existing node $u$; $u$ leaves the overlay announcing its intent to
other nodes of the overlay).

The overlay is used to implement an \emph{indexing scheme} for the stored data.
Such a scheme supports the operations \emph{search} for an element, \emph{insert}
a new element, \emph{delete} an existing element, and \emph{range query} for elements
in a specific range.

In terms of efficiency, an overlay network should address the following issues:

\begin{itemize}
\item \emph{Fast queries and updates:} updates and queries must be executed
in a minimal number of communication rounds and using a minimal number of
messages.

\item \emph{Ordered data:} keeping the data in order facilitates the implementation of various
enumeration queries when compared to a simple dictionary that can only answer membership queries,
including those arising in DNA databases, location-based services, and prefix searches for file
names or data titles. Indeed, the ever-wider use of P2P infrastructures has found applications that
require support for range queries (e.g., \cite{djxk09}).

\item \emph{Size of nodes (peers):} the size of a node is the routing information (links and related data)
maintained by this node and it is not related to the number of data elements stored in it. Keeping the size of
a node small allows for more efficient update operations, but in general reduces the efficiency of access
operations while aggravating fault tolerance.

\item \emph{Fault Tolerance:} the structure should be able to discover and heal failures at nodes or links.

\item \emph{Congestion:} it refers to the distribution of the load of search (access) operations per node, aiming at
distributing this load equally across all nodes. The congestion is an \emph{expected} quantity defined
as the maximum, among all nodes, of the fraction of the expected number of search operations at a node,
due to a random sequence of search operations on the structure, divided by the total number of
search operations.

\item \emph{Load Balancing:} it refers to the distribution of data elements on the nodes.
The goal of load balancing is to distribute equally the $n$ elements stored in the $N$
nodes of the network (typically $N\ll n$).
That is, if there are $N$ nodes and $n$ data elements,
ideally each node should carry approximately
$k$ elements, where $\lfloor n/N \rfloor \leq k \leq \lfloor n/N \rfloor +1$.

\end{itemize}

There has been considerable recent work in devising effective distributed search and update techniques.
Existing structured P2P systems can be classified into two broad categories: distributed hash table (DHT)-based
systems and tree-based systems. Examples of the former, which constitute the majority,
include Chord \cite{KKSMB01}, CAN \cite{RFHKS01}, Pastry \cite{RD01}, Symphony \cite{MBR2003}, and Tapestry \cite{ZHSRJK04}.
DHT-based systems support exact match queries well and use (successfully) probabilistic methods to
distribute the workload among nodes equally. DHT-based systems work with little synchrony and high \emph{churn}
(the collective effect created by independent burstly arrivals and departures of nodes),
a fundamental characteristic of the Internet. Since hashing destroys the ordering on keys, DHT-based
systems typically do not possess the functionality to support straightforwardly
range queries, or more complex queries based on data ordering (e.g., nearest-neighbor and string
prefix queries). Some efforts towards addressing range queries have been made in \cite{GAA2003,SGAA2004},
getting however approximate answers and also making exact searching highly inefficient.
The most recent effort towards range queries is reported in \cite{zllll09}.

Tree-based systems are based on hierarchical structures. They support range queries more
naturally and efficiently as well as a wider range of operations, since they maintain the ordering of
data. On the other hand, they lack the simplicity of DHT-based systems, and they do not always guarantee data
locality and load balancing in the whole system. Important examples of such systems include
Family Trees \cite{zh04}, BATON \cite{jov05}, and Skip List-based schemes \cite{P90}
like Skip Graphs (SG) \cite{as03,gbm04}, NoN SG \cite{mnw04}, SkipNet (SN), Deterministic SN \cite{hm03},
Bucket SG \cite{akk04}, Skip Webs \cite{aeg05}, Rainbow Skip Graphs (RSG) \cite{gns06}
and Strong RSG \cite{gns06} that use randomized techniques to create and maintain the hierarchical structure.

In this work, we focus on tree-based overlay networks that support directly range and more complex queries.
Let $N$ be the number of nodes present in the network and let $n$ denote the size of data ($N\ll n$).
Let $M$ be the size of available memory at each node, $Q(n,N)$ be the cost of a single query, $U(n,N)$ be
the cost of an update, $C(n,N)$ be the congestion per node (measuring the load) incurred by search operations,
and let $L(n,N)$ be the cost for load balancing the overlay with respect to (w.r.t.) element updates.
Regarding congestion, each node issues one operation, while the destination node of the
operation is assumed to be selected uniformly at random among all nodes of the network. Congestion depends
on the distribution of elements into nodes as well as on the topology of the overlay. It provides hints
as to how well the structure avoids the existence of \emph{hotspots} (i.e., nodes which are accessed
multiple times during a sequence of operations -- the root of a tree is usually a hotspot in decentralized tree structures).

\begin{table}[htbp]
	\begin{center}
		\begin{tabular}{| l | c | c | c | c | c | c |}
    \hline
     \scriptsize Methods &  \scriptsize $N$ &  \scriptsize $M$ &  \scriptsize $Q(n,N)$ &  \scriptsize $U(n,N)$ &  \scriptsize $C(n,N)$ &  \scriptsize $L(n,N)$\\ \hline \hline

     \scriptsize SG \cite{as03,gbm04}  &  \scriptsize $\leq n$ &  \scriptsize $O(\log{N})$ &  \scriptsize $\widehat{O}(\log{N})$ w.h.p. &  \scriptsize $\widehat{O}(\log{N})$ w.h.p. &  \scriptsize $\widehat{O}(\frac{\log{N}}{N})$ & \scriptsize $\widetilde{O}(\log{N})$ \\ \hline

     \scriptsize NoN SG \cite{mnw04} &  \scriptsize $n$ &  \scriptsize $O(\log^2{n})$ &  \scriptsize $\widehat{O}(\frac{\log{n}}{\log\log{n}})$ &  \scriptsize $\widehat{O}(\log^{2}{n})$ &  \scriptsize $\widehat{O}(\frac{\log^2{n}}{n})$ & -- \\ \hline

\scriptsize Determ.~SN \cite{hm03} &  \scriptsize $n$ &  \scriptsize $O(\log{n})$ &  \scriptsize $O(\log{n})$ &  \scriptsize $O(\log^2{n})$ &  \scriptsize $O(\frac{n^{0,32}}{n})$ & -- \\ \hline

     \scriptsize BATON \cite{jov05} &  \scriptsize $\leq n$ &  \scriptsize $O(\log{N})$ &  \scriptsize $O(\log{N})$ &  \scriptsize $O(\log{N})$ & -- &  \scriptsize $\overline{O}(\log{n})$\\ \hline

     \scriptsize Family Trees \cite{zh04} &  \scriptsize $n$ &  \scriptsize $O(1)$ &  \scriptsize $\widehat{O}(\log{n})$ &  \scriptsize $\widehat{O}(\log{n})$ &  \scriptsize $\widehat{O}(\frac{\log{n}}{n})$ & -- \\ \hline

     \scriptsize Bucket SG \cite{akk04} &  \scriptsize $\leq n$ &  \scriptsize $O(\frac{n}{N}+\log{N})$ &  \scriptsize $\widehat{O}(\log{N})$ &  \scriptsize $\widehat{O}(\log{N})$ &  \scriptsize $\widehat{O}(\frac{1}{N}+\frac{\log{N}}{n})$ &  \scriptsize No Bounds \\ \hline

     \scriptsize Skip Webs \cite{aeg05} &  \scriptsize $n$ &  \scriptsize $O(\log{n})$ &  \scriptsize $\widehat{O}(\frac{\log{n}}{\log\log{n}})$ &  \scriptsize $\widehat{O}(\frac{\log{n}}{\log\log{n}})$ &  \scriptsize $\widehat{O}(\frac{\log{n}}{n})$ & -- \\ \hline

     \scriptsize Rainbow SG \cite{gns06} &  \scriptsize $n$ &  \scriptsize $O(1)$ &  \scriptsize $\widehat{O}(\log{n})$ w.h.p. &  \scriptsize $\overline{O}(\log{n})$ w.h.p. &  \scriptsize $\widehat{O}(\frac{\log{n}}{n})$ & -- \\ \hline

     \scriptsize Strong RSG \cite{gns06} &  \scriptsize $n$ &  \scriptsize $O(1)$ &  \scriptsize $O(\log{n})$ & \scriptsize $\widetilde{O}(\log{n})$ &  \scriptsize $\widehat{O}(\frac{n^\epsilon}{n})$ & -- \\ \hline \hline

     \scriptsize \textbf{D$^2$-tree} &  \scriptsize $\leq n$ &  \scriptsize $O(1)$ &  \scriptsize $O(\log{N})$ & \scriptsize $\widetilde{O}(\log{N})$ &  \scriptsize $\widehat{O}(\frac{\log{N}}{N})$ &  \scriptsize $\widetilde{O}(\log{N})$ \\ \hline
\hline
  \end{tabular}
	\end{center}
\caption{A comparison between previous methods and the $D^2$-tree. By $\widehat{O}$ we represent expected bounds, by $\widetilde{O}$ we represent amortized bounds,
and by $\overline{O}$ expected amortized bounds. All other bounds are worst-case. Typically, $N\ll n$.}
	\label{tab:comp}
\end{table}
A comparison of the aforementioned tree-based overlays is given in Table~\ref{tab:comp}.
We would like to emphasize that w.r.t.~load balancing,
there are solutions in the literature either as part of the overlay (e.g., \cite{jov05})
or as a separate technique (e.g., \cite{akk04,gbm04}). These solutions are either heuristics, or
provide expected bounds under certain assumptions, or amortized bounds but at the expense of increasing
the memory size per node. In particular, in BATON \cite{jov05}, a decentralized overlay is provided with load balancing based on
data migration. However, their $O(\log{n})$ amortized bound is valid only subject to a
probabilistic assumption about the number of nodes taking part in the data migration process,
and thus it is in fact an amortized expected bound. In the case of Bucket Skip Graphs \cite{akk04},
elements are structured in buckets attached to nodes. Although it is a solution
which can be applied to a large set of P2P structures, it has two drawbacks: (i) a list of
free nodes is required, and (ii) a global control for the size of the buckets is imperative.
The latter is very crucial and is tackled by heuristics with no analysis whatsoever.
The solution proposed in \emph{this} paper can be used to tackle the problem
of bucket size control efficiently in an amortized sense.
A deterministic solution for load-balancing comes from \cite{gbm04}, in which a $O(\log{N})$ amortized bound
w.r.t. the elements transferred is provided. Their solution,
stemming from a centralized parallel database framework,
is a node migration process in which a lightweight node is selected,
its load is moved to an adjacent node and then it shares the load of the heavyweight node.
This process was initially developed for a parallel database in which there is central control.
The original process was translated to a decentralized framework by applying a second overlay on the nodes
where the order is defined w.r.t.~the load of the nodes. In particular, they maintain two
skip graphs on the nodes, one w.r.t.~the order of elements and one w.r.t.~the load
of the nodes (in fact the second one can be replaced by a decentralized min-heap \cite{ss09}).
Apart from this deficit, one more problem with this method is that it assumes that node migration
is possible and each time an update takes place the structure of the overlay is changed. This incurs an additive cost equal to the cost update of the
structure. Additionally, in structures that strive for deterministic bounds (like BATON)
this is not possible since such structures are quite strict and do not allow the placement of a node anywhere in the structure.

The basic characteristic of a decentralized overlay is that the balancing information is local.
Locality is a must in a decentralized structure since there are no means to acquire global information.
For example, internal memory height-balanced trees have local balancing information and thus lend themselves
nicely to P2P environments but they have problems with congestion w.r.t. updates. In particular,
in a sequence of $n$ operations the root can be accessed $O(\sqrt{n})$ times. However, weight balanced trees
avoid this bottleneck having very good congestion w.r.t. updates but they need a lazy mechanism
as the one described in this paper to update the weight information.

\paragraph{\textbf{Our Contribution.}}
In this paper we present a new tree-based overlay, called the
\emph{Deterministic Decentralized tree} or \textit{$D^2$-tree}. The
$D^2$-tree (see also Table~\ref{tab:comp}):
\begin{itemize}
\item uses $O(1)$ space per node;

\item achieves a deterministic $O(\log N)$ query bound;

\item achieves a deterministic (amortized) $O(\log{N})$ update bound for elements as well as for node joins
and departures;

\item achieves \emph{optimal} congestion;

\item exhibits a deterministic (amortized) $O(\log{N})$ bound for load-balancing;

\item supports ordered data queries optimally, and tolerates node failures.
\end{itemize}

The $D^{2}$-tree is an overlay consisting of two levels. The upper level is a perfect binary tree. The leaves of this tree are representatives of the buckets that constitute the lower level of the $D^2$-tree. Each bucket is a set of nodes and these nodes are structured as a doubly linked list. Each bucket contains $O(\log{N})$ nodes. Since $N$ changes, the size of buckets is dynamically maintained by the overlay.

In the $D^2$-tree, we separate the index from the overlay structure using the
load-balancing mechanism. The number of elements per node is dynamic w.r.t.
 node joins and departures and it is controlled by the load-balancing mechanism.
Moreover, the number of nodes of the perfect binary tree is not connected by any means
to the number of elements stored in the structure.
The overlay structure supports the operations of node join and node departure, while at
the same time it tackles failures of nodes whenever these are discovered.

Our load-balancing technique distributes almost equally the elements among nodes
by making use of weights. Weights are used to define a metric of load-balance,
which shows how uneven is the load between nodes. When the load is uneven, then
a data migration process is initiated to equally distribute elements.

Our load-balancing technique is quite general and can be applied to any hierarchical
decentralized overlay (e.g., BATON, Skip Graphs) with the following specifications:
\begin{itemize}
\item The overlay structure must be a tree with height $O(\log{N})$ and with each node having $O(1)$ children.

\item Nodes at level $i$ having the same father have approximately (within constant factors) the same weight, which is $\Omega(i^4)$.

\item Updates are performed at the leaves. Alternatively, if each node has access to a leaf in $O(1)$ messages
then this is enough, since the update is simply forwarded to this leaf.
\end{itemize}

The rest of the paper is organized as follows. Section~\ref{sec:def} presents
some definitions and notation used throughout the paper. We discuss the load balancing technique
in Section~\ref{sec:load}, and present the $D^{2}$-tree in Section~\ref{sec:d2}.
We conclude in Section~\ref{sec:conclusion}. A preliminary version of this work
appeared as \cite{BSTZ2010}.

\section{Definitions and Notation}
\label{sec:def}

In this section, we give some definitions regarding tree structures that will
be used throughout the paper.

Let $\mathcal{T}$ be a tree.
Based on $\mathcal{T}$ ancestor-descendant relationships are defined in a natural way. There is a node that
has no ancestor (the \textit{root}) and there are nodes with no descendants (the \textit{leaves}).
All nodes which are not leaves are called \textit{internal}. The subgraph induced by the descendants
of node $v$ (including $v$) in $\mathcal{T}$ is the \textit{subtree} of $v$.
The \textit{height} of node $v$ is the length (in number of edges) of the longest path from $v$ to one of its leaves.
The \textit{depth} or \textit{level} of node $v$ is the length of the path from $v$ to the root.
Two nodes are called \textit{brothers} when they have the same father and they are consecutive in his child list.

The \textit{weight} $w(v)$ of a node $v$ is equal to the number of elements stored in its subtree.
The number of elements residing in a node $v$ is denoted by $e(v)$.

We define the \emph{size} of $v$, denoted by $|v|$, as the number
of nodes of the subtree of $v$ (including $v$) in $\mathcal{T}$.
The \emph{density} $d(v)$ of node $v$ is defined as $d(v)=\frac{w(v)}{|v|}$ and represents
the mean number of elements per node in the subtree of $v$.

Let $v$ be a node at height $h$, let $p$ be a child of $v$ and let $q$ be the right
brother of $p$; both $p$ and $q$ are at height $h-1$.

The \textit{criticality} $c(p,q)$ of the two brother nodes $p$ and $q$
is defined as $c(p,q)=\frac{d(p)}{d(q)}$ and represents their difference in densities.

Let $\mathcal{T'}$ be a perfect binary tree.
The \textit{node criticality} $nc_v$ of a node $v\in\mathcal{T'}$ at level $\ell$ with left and right
children $w$ and $z$ at level $\ell+1$, respectively, is defined as $nc_v=\frac{|w|}{|v|}$.
The node criticality represents the difference in size between a node ($v$) and its
left child ($w$).

\section{Deterministic Load Balancing}
\label{sec:load}

The main idea of our load-balancing mechanism is as follows.
It distributes almost equally the elements among nodes by making use of weights,
which are used to define a metric showing how uneven is the load between nodes.
When the load is uneven, then a data migration process is initiated to equally
distribute the elements.

We describe the load-balancing mechanism in two steps. First, we provide a mechanism that allows
for efficient and local update of weight information in a tree when elements are added or removed
at the leaves. This is necessary to avoid hotspots. Then, we describe the load-balancing
scheme in a tree overlay. In the following, we assume that the overlay structure is a tree $\mathcal{T}$.

\subsection{A Technique for Amortized Constant Weight Updating}
\label{ssec:rebalance}

We provide a technique that lazily updates the weights on the nodes of a tree. When an element is
added/removed to/from a leaf $u$ in $\mathcal{T}$, the weights on the path from $u$ to the root
must be updated. If the height of $\mathcal{T}$ is $H$, then the cost of the weight updating is $O(H)$.
Assume that node $v$ lies at height $h$ and its children are $v_{1},v_{2},\ldots,v_{s}$ at height $h-1$.
We relax the weight of a node and its recomputation. We define the {\em virtual weight} $b(v)$ of $v$ as the
weight stored in node $v$. In particular, for node $v$ the following invariants are maintained

\begin{invariant} \label{inv:insertions}
$b(v)>e(v)+(1-\epsilon_{h})\left(\sum_{i=1}^{s}{b(v_{i})}\right)$
\end{invariant}
\begin{invariant} \label{inv:deletions}
$b(v)<e(v)+(1+\epsilon'_{h})\left(\sum_{i=1}^{s}{b(v_{i})}\right)$
\end{invariant}
where $\epsilon_{h}$ and $\epsilon'_{h}$ are appropriate constants. These invariants imply that the weight
information is approximate, at most by a multiplicative constant.

Assume that an update takes place at leaf $u$. Apparently, only the weight of its ancestors need to be updated by $\pm 1$ and no other node is affected. We traverse the path from $u$ to the root until we find a node $z$ for which Invariants~\ref{inv:insertions} and \ref{inv:deletions} hold. Let $v$ be its child for which either Invariant~\ref{inv:insertions} or \ref{inv:deletions} does not hold on this path. We recompute all weights on the path from $u$ to $v$. In particular, for each node $z$ on this path, we update its weight information by taking the sum of the weights written in its children plus the number of elements that $z$ carries.

The constants $\epsilon_{h}$ and $\epsilon'_{h}$ are chosen such that for all nodes
the virtual weight will be within a constant factor $c>0$ of the real weight, i.e.,
$$\frac{1}{c}\cdot w(v) < b(v) < c\cdot w(v)$$
First we prove the lower bound on $v$. At height $h$:
$$b(v) > (1-\epsilon_{h})\left(\sum_{j=1}^{s}{b(v_{j})}+e(v)\right)$$
By recursing and lower bounding to get clean bounds we get
\[b(v) > w(v)\prod_{j=2}^{h}{(1-\epsilon_{j})}\]
Choosing\footnote{We have chosen this $\epsilon_{j}$ for simplicity.
In fact for any $\eta>0$, choosing $\epsilon_{j}=\frac{1}{j^{1+\eta}}$ is sufficient.}
 $\epsilon_{j}=\frac{1}{j^{2}}$ , we get
\[\prod_{j=2}^{h}{\left(1-\frac{1}{j^{2}}\right)}=\frac{\prod_{j=2}^{h}{(j-1)}\times\prod_{j=2}^{h}{(j+1)}}{\prod_{j=2}^{h}{j^2}}
= \frac{(h-1)!\prod_{j=3}^{h}{j}}{(h!)^2}=\frac{h+1}{2h}>\frac{1}{2}\]
Similarly, for the upper bound we get
$b(v) < w(v)\prod_{j=1}^{h}{(1+\epsilon'_{j})}$.
Choosing $\epsilon_{j}=\frac{1}{j^{2}}$ and taking into account that $1+\frac{1}{j^{2}}<\frac{1}{1-\frac{1}{j^2}}$, we have
\[\prod_{j=2}^{h}{\left(1+\frac{1}{j^{2}}\right)} < \prod_{j=2}^{h}{\left(\frac{1}{1-\frac{1}{j^2}}\right)}=
\frac{1}{\prod_{j=2}^{h}{\left({1-\frac{1}{j^2}}\right)}}<\frac{1}{\frac{1}{2}} = 2\]
As a result, by choosing $\epsilon_{h}=\epsilon'_{h}=\frac{1}{h^{2}}$ we get that:
\begin{equation}
\label{eq:vw}
\frac{1}{2}\cdot w(v) < b(v) < 2 \cdot w(v)
\end{equation}

The following lemma states how frequently the weight information in each node changes.

\begin{lemma} \label{lem:rebalancing}
The minimum number of updates in the subtree of $v$, causing a weight update at $v$, is $\Theta(\epsilon_{h}w(v))$.
\end{lemma}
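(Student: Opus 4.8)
The statement concerns how many leaf updates in the subtree of $v$ are needed before the virtual weight $b(v)$ must be recomputed. The key observation is that $b(v)$ is recomputed only when one of Invariants~\ref{inv:insertions} or \ref{inv:deletions} is violated at $v$ (or at a node on the path below $v$ that forces recomputation up to $v$). So I would measure the ``slack'' in these invariants right after a recomputation at $v$, and then count how many $\pm 1$ changes to the true weight $w(v)$ are needed to exhaust that slack.

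First I would set up the accounting: immediately after a recomputation at $v$, we have $b(v) = e(v) + \sum_{i=1}^{s} b(v_i)$ exactly (this is what the recomputation does). Call this the \emph{balanced} state. Now Invariant~\ref{inv:insertions} is violated at $v$ only when $b(v)$ has fallen to $e(v) + (1-\epsilon_h)\sum_i b(v_i)$, i.e., when the quantity $e(v) + \sum_i b(v_i)$ has grown — via insertions in the subtree, each changing it by $+1$ — by at least $\epsilon_h \sum_i b(v_i)$ relative to the stored value $b(v)$. (Symmetrically for Invariant~\ref{inv:deletions} under deletions.) Since $b(v) \ge \tfrac12 w(v)$ by Equation~\eqref{eq:vw}, and since $\sum_i b(v_i)$ is within a constant factor of $w(v)$ as well, the threshold $\epsilon_h \sum_i b(v_i)$ is $\Theta(\epsilon_h w(v))$. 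Hence at least $\Theta(\epsilon_h w(v))$ updates must occur in the subtree of $v$ before the invariant at $v$ can break — giving the lower bound on the number of updates per weight-update at $v$.

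For the matching upper bound (the $\Theta$, not just $\Omega$), I would exhibit a scenario: start in the balanced state and perform only insertions in the subtree. Each one increments $e(v) + \sum_i b(v_i)$ by exactly $1$ (the children's stored weights are updated along the recomputation path, or the update passes through a child whose weight is refreshed), while $b(v)$ stays fixed, so after exactly $\lceil \epsilon_h \sum_i b(v_i) \rceil = \Theta(\epsilon_h w(v))$ such updates Invariant~\ref{inv:insertions} fails at $v$ and a recomputation is forced. This shows the bound is tight.

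**Main obstacle.** The delicate point is the interaction between $v$ and its children: $b(v)$ is recomputed not only when its \emph{own} invariant fails, but also when some descendant's invariant fails and the recomputation path reaches up to $v$. I need to argue that in the ``cheapest'' case — the one realizing the $\Theta(\epsilon_h w(v))$ minimum — we can route all updates so that the children's stored weights $b(v_i)$ stay fixed until $v$'s own invariant is the binding constraint, so that the count is governed purely by the threshold $\epsilon_h \sum_i b(v_i) = \Theta(\epsilon_h w(v))$. A second minor subtlety is confirming $\sum_i b(v_i) = \Theta(w(v))$: this follows because $\sum_i b(v_i) = b(v) - e(v)$ in the balanced state is within a constant of $w(v) - e(v) \le w(v)$, and is also $\Omega(w(v))$ unless $v$ itself holds a constant fraction of the subtree's elements — a case that only helps the bound since then even fewer subtree updates are needed relative to $w(v)$, or can be absorbed into the constant. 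Once these are pinned down, both directions of the $\Theta$ follow from the arithmetic above.
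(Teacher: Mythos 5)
Your proposal follows essentially the same route as the paper's proof: after a recomputation $b(v)$ equals $e(v)+\sum_{i}b(v_i)$ exactly, a new violation requires this quantity to drift by $\epsilon_h\cdot\Theta(w(v))$ relative to the stored $b(v)$, each update contributes a unit of drift, and the bound (\ref{eq:vw}) converts the threshold into $\Theta(\epsilon_h w(v))$. The additional care you take with the tightness (upper-bound) direction and with the interaction between $v$ and recomputations at its children goes beyond what the paper writes down, but the core accounting is identical.
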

\begin{proof}
The weight update of node $v$ is a result of the violation of either of Invariants~\ref{inv:insertions} or \ref{inv:deletions}.  After the update, it holds that $b(v)=\sum_{i=1}^{s}{b(v_i)}+e(v)$. Node $v$ has its weight updated again when $b(v)<(1-\epsilon_{h})\left(\sum_{i=1}^{s}b(v_i)+e(v)\right)$ (or $b(v)>(1+\epsilon_{h})\left(\sum_{i=1}^{s}{b(v_i)}+e(v)\right)$ symmetrically). This will happen only when the weight of the subtree of $v$ changes by $\epsilon_{h}\left(\sum_{i=1}^{s}{b(v_i)}+e(v)\right)$. This change is a lower bound on the number of operations performed in this subtree,
no matter when they have been performed.
Taking into account (\ref{eq:vw}), we get the lemma.
\qed
\end{proof}
The following theorem states that the weight updating mechanism is efficient in an amortized sense.
\begin{theorem} \label{thm:wupdate}
The amortized cost of the weight update algorithm is $O(1)$.
\end{theorem}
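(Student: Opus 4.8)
The plan is to use the standard amortized analysis technique of assigning credits (a potential function) to nodes so that each update operation deposits a constant number of credits in total, and every weight recomputation is fully paid for by previously deposited credits. First I would observe, via Lemma~\ref{lem:rebalancing}, that once the weight of a node $v$ at height $h$ is recomputed, at least $\Theta(\epsilon_h w(v))$ further updates must occur in the subtree of $v$ before $v$'s weight must be recomputed again. Equivalently, if we define for each node $v$ a counter $\Delta(v)$ equal to the number of updates that have passed through $v$ since its last recomputation, then $v$ is recomputed only when $\Delta(v) = \Omega(\epsilon_h w(v))$, i.e.\ after $\Omega(w(v)/h^2)$ updates in its subtree.

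Next I would set up the charging. Each update at a leaf $u$ travels up the root path and, for each ancestor $v$ at height $h$, contributes $1$ to $\Delta(v)$; I would give each such update $c/h^2$ credits to be stored ``at level $h$'' for each of its $O(\log N)$ ancestors — but this already sums to $\sum_h O(1/h^2) = O(1)$ credits per update since $\sum_{h\ge 1} 1/h^2$ converges. So the total credit released per update is $O(1)$. The key point is then that when node $v$ at height $h$ gets recomputed, the work done is $O(1)$ (updating one weight value, namely recomputing $b(v)$ from the children's stored weights and $e(v)$; note the algorithm recomputes the whole path from $u$ down to the violating child $v$, but I would charge each node on that path separately), and by Lemma~\ref{lem:rebalancing} this recomputation was preceded by $\Theta(\epsilon_h w(v)) = \Theta(w(v)/h^2)$ updates in $v$'s subtree, each of which deposited $\Theta(1/w(v))$ credits earmarked for $v$ — wait, that earmarking needs care, so instead I would earmark by level: each update passing through an ancestor at height $h$ deposits a fixed $\Theta(1/h^2)\cdot(1/\text{something})$; the cleaner bookkeeping is to say each update deposits, for its ancestor $v$ at height $h$, exactly $1/(\epsilon_h w(v))$ units scaled so that after $\epsilon_h w(v)$ such updates the accumulated amount at $v$ is $\Theta(1)$, enough to pay for the $O(1)$ recomputation at $v$. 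The per-update total is then $\sum_{v \text{ ancestor of } u} 1/(\epsilon_h w(v))$, and since $w(v) \ge $ (roughly) the weight contributed through that path — here I would invoke that along the root path the weights are geometrically decreasing, or more simply that each update deposits at most $O(1/h^2)$ at its height-$h$ ancestor because $w(v)$ is large enough — giving $\sum_h O(1/h^2) = O(1)$.

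The main obstacle I anticipate is making the credit assignment to $v$ consistent: the quantity $\epsilon_h w(v)$ that Lemma~\ref{lem:rebalancing} gives as the recomputation threshold changes over time as $w(v)$ changes, so I must argue that between two consecutive recomputations of $v$ the value $w(v)$ stays within a constant factor (which follows from the invariants and \eqref{eq:vw}, since $b(v)$ only drifts by a $(1\pm\epsilon_h)$ factor), and hence the number of ``missing'' updates is $\Theta(\epsilon_h w(v))$ for a well-defined $w(v)$ up to constants. I would also need to handle the subtlety that a single update recomputes an entire path (from $u$ up to the violating child $v$), not a single node; but each node on that path individually had its own invariant violated only after $\Theta(\epsilon_h w(v))$ subtree updates, so charging each node on the recomputed path separately is legitimate and the total telescoping still yields $O(1)$ amortized. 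Finally I would conclude: total work over a sequence of $m$ updates is $O(m)$ plus $O(\log N)$ for the initial credit-free path traversals, hence $O(1)$ amortized per update, and combined with \eqref{eq:vw} the stored weights remain within a factor $2$ of the true weights throughout.

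\begin{proof}[Proof sketch]
We use a credit (potential) argument. Assign to each update operation a budget of $c_0$ credits, where $c_0 = \sum_{h\ge 1} c/\epsilon_h \cdot (\text{const})$ is a constant because $\sum_h 1/h^2$ converges. When an update is performed at a leaf $u$, for each ancestor $v$ of $u$ at height $h$ we place $\Theta(1/(\epsilon_h w(v)))$ credits on $v$ (normalised so that accumulating the $\Theta(\epsilon_h w(v))$-update threshold of Lemma~\ref{lem:rebalancing} yields $\Theta(1)$ credits on $v$); since along the root path the stored weights $b(v)$ — and hence by \eqref{eq:vw} the true weights $w(v)$ — are such that $1/(\epsilon_h w(v)) = O(1/h^2)$, the total placed per update is $\sum_h O(1/h^2) = O(1)$, so $c_0$ credits suffice.

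By Lemma~\ref{lem:rebalancing}, between two successive recomputations of $v$'s weight at least $\Theta(\epsilon_h w(v))$ updates occur in the subtree of $v$, and by the invariants together with \eqref{eq:vw} the value of $w(v)$ varies by at most a constant factor during this interval; hence $v$ accumulates $\Theta(1)$ credits before it is recomputed. Each recomputation of a single node costs $O(1)$ messages, and the algorithm that processes an update recomputes a path of nodes, charging each such node against its own accumulated credits. Thus every recomputation is paid for. The only unpaid work is the $O(\log N)$-length traversal that every update performs to locate the stopping node $z$; charging this directly to the $c_0 = O(1)$ initial budget of the update, the amortized cost per update is $O(1)$.
\qed
\end{proof}
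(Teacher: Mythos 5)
Your credit scheme is, in substance, the same amortization as the paper's proof: the paper charges each update a cost of $\frac{1}{\epsilon_h b(v^{(h)})}=\frac{h^2}{b(v^{(h)})}$ at its height-$h$ ancestor (justified by Lemma~\ref{lem:rebalancing}) and sums over heights using the assumption $b(v^{(h)})=\Omega(h^4)$ to obtain $\sum_h O(h^2/h^4)=O(1)$. Your deposit of $\Theta\bigl(1/(\epsilon_h w(v))\bigr)$ per ancestor is exactly this charge, and your appeal to ``$w(v)$ is large enough'' is precisely the $\Omega(h^4)$ weight assumption; you should name it explicitly, since it is the one quantitative hypothesis that makes the sum converge (with weights only $\Omega(h^3)$, say, the same sum is the harmonic series and the bound degrades to $O(\log H)$). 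Your handling of the time-variation of $w(v)$ via the invariants and~(\ref{eq:vw}) matches what the paper's Lemma~\ref{lem:rebalancing} already encapsulates.

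One step in your final accounting is wrong as written: you declare that ``the only unpaid work is the $O(\log N)$-length traversal that every update performs to locate the stopping node $z$'' and then charge it to the update's $O(1)$ budget. An $\Omega(\log N)$ per-update cost cannot be absorbed into a constant budget, and if every update really walked the full root path the theorem would be false. The saving fact is that the traversal does not go to the root: it ascends from $u$ only until the \emph{first} node whose invariants hold, so its length is one more than the number of nodes whose weights get recomputed on that update, and is therefore already paid for by the credits you placed on those nodes, plus $O(1)$ from the update itself. With that correction your argument is sound and coincides with the paper's.
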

\begin{proof}
Lemma~\ref{lem:rebalancing} states that if we make $\epsilon_{h} w(v)$ update operations then the maximum number of weight changes
at node $v$ is $1$. As a result, the amortized cost per update operation at height $h$ is $\frac{1}{\epsilon_{h}b(v)}$.
In the following, given that $v^{(i)}$ is the node on the path at height $i$ and by the assumption that $b(v^{(i)})=\Omega(i^4)$ we get that the amortized cost is:
\[\sum_{i=0}^{H}{\frac{1}{\epsilon_{i}b(v^{(i)})}}=\sum_{i=0}^{H}{\frac{i^2}{b(v^{(i)})}}=\sum_{i=0}^{H}{O\left(\frac{i^2}{\Omega(i^4)}\right)}=O(1) \mbox{\hfill \qed} \]
\end{proof}

\subsection{Updates and Load Balancing}
\label{ssec:load}

We now investigate how load balancing is realized on the balanced tree structure $\mathcal{T}$. For clarity of exposition,
we assume that $\mathcal{T}$ is a binary tree. The following discussion can be easily generalized for trees with $O(1)$
maximum degree, simply by looking between brother nodes.

First, bear in mind that this mechanism does not tamper with the structure of $\mathcal{T}$. An update operation
(either insertion or deletion of an element) is initiated at node $v$. Node $v$ issues a search for the involved
element and the appropriate node $u$ is returned. Then, the update request is forwarded from $v$ to $u$. Node $u$
executes the update operation and signals $v$ for the status of the update. The load balancing mechanism redistributes
the elements among nodes when the load between nodes is not distributed equally enough.

Assume that node $v$ at height $h$ has child $p$ and its right brother $q$ at height $h-1$.
Recall that $|v|$ denotes the \emph{size} of $v$ (number of nodes in the subtree of $v$, including $v$)
in the overlay structure, $d(v)=\frac{w(v)}{|v|}$ denotes the \emph{density} of $v$ (representing
the mean number of elements per node in the subtree of $v$), and that $c(p,q)=\frac{d(p)}{d(q)}$
denotes the \textit{criticality} of the two brother nodes $p$ and $q$ (representing their difference in densities).
The following invariant guarantees that there will not be large differences between densities.
\begin{invariant} \label{inv:dens}
For two brothers $p$ and $q$, it holds that $\frac{1}{c} \leq c(p,q) \leq c$, $1< c \leq 2$.
\end{invariant}
For example, choosing $c=2$ we get that the density of any node can be at most twice or half of that of its brother.
In the more general case where the number of children of node $v$ is $O(1)$, we get that no child of $v$ has more
density than a constant factor w.r.t. the other children of $v$.

When an update takes place at leaf $u$, weights are updated by using the mechanism described in Section~\ref{ssec:rebalance}.
In this way, we guarantee that no hotspot exists w.r.t. weight updating as implied by Lemma~\ref{lem:rebalancing}.
Then, starting from $u$, the highest ancestor $w$ is located that is unbalanced w.r.t. his brother $z$,
meaning that Invariant~\ref{inv:dens} is violated. Finally,
the elements in the subtree of their father $v$ are redistributed uniformly so that the density of the brothers becomes equal;
this procedure is henceforth called \textit{redistribution} of node $v$. Assume that the redistribution phase
has a cost of $O(f(w(v)))$, for some increasing function $f:\mathbb{N} \rightarrow  \mathbb{N}$.
The following theorem provides amortized bounds for the redistribution.
\begin{theorem}
\label{thm:redistcost}
 The load balancing has an amortized cost of $O\left(H\frac{f(n)}{n}\right)$.
\end{theorem}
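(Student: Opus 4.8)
The plan is to prove the bound by an amortization argument that charges the cost of each \textit{redistribution} of a node to the update operations responsible for it. The first ingredient is a ``gap'' claim in the spirit of Lemma~\ref{lem:rebalancing}: immediately after a node $v$ is redistributed, every pair of brother children of $v$ has equal density, i.e.\ criticality exactly $1$; since $c>1$ is a fixed constant and a single update at a leaf of the subtree of $v$ changes the density of exactly one child of $v$, and only by an additive $O(1/|v_i|)$, Invariant~\ref{inv:dens} cannot be violated for a child pair of $v$ until $\Omega\bigl((c-1)\,w(v)\bigr)=\Omega(w(v))$ updates have been performed in the subtree of $v$. Establishing this cleanly uses the framework's structural hypotheses: the overlay tree is size-balanced and brothers carry weights within constant factors, so each child of $v$ has weight $\Theta(w(v))$ (hence driving a criticality out of $[1/c,c]$, in either direction, genuinely costs $\Omega(w(v))$ updates), and $w(v)$ itself changes by at most a constant factor over such a batch, so it is harmless which instant's value of $w(v)$ we quote. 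One should also note that criticality is computed from the \emph{stored} (virtual) weights, which by~(\ref{eq:vw}) are within a factor $2$ of the true weights, so the slack only affects constants.

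Next I would set up the charging. When $v$ is redistributed, I spread its cost $O(f(w(v)))$ evenly over the $\Omega(w(v))$ updates performed in the subtree of $v$ since the last moment the subtree of $v$ was ``balanced'' --- that moment being either the previous redistribution of $v$, or a redistribution of a proper ancestor $v'$ of $v$, since redistributing $v'$ redistributes all elements of the subtree of $v'$ and therefore equalizes densities everywhere below $v'$. Thus each update in the subtree of $v$ is charged $O(f(w(v))/w(v))$ by $v$. Because these ``reset epochs'' of $v$ partition the timeline, no update is charged twice by the same node, and an update at a leaf $u$ is charged only by the ancestors of $u$, of which there are at most $H+1$.

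Summing the charges received by a single update at leaf $u$ with ancestors $v^{(0)},\dots,v^{(H)}$ at heights $0,\dots,H$ gives an amortized cost of $\sum_{h=0}^{H} O\!\left(\frac{f\bigl(w(v^{(h)})\bigr)}{w(v^{(h)})}\right)$. Bounding each term by $\frac{f(n)}{n}$ --- valid since $f$ is increasing, $w(v^{(h)})\le n$, and the redistribution cost scales at least linearly (in the $D^2$-tree $f$ is linear, so this term is simply $\Theta(1)$ and the total is $O(H)=O(\log N)$) --- yields the claimed amortized cost $O\!\left(H\,\frac{f(n)}{n}\right)$ per update.

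The main obstacle is the gap claim of the first paragraph. The delicate points are that it is a redistribution of an \emph{ancestor}, not necessarily of $v$ itself, that resets the subtree of $v$; that updates into a child subtree of $v$ can push the criticality toward either end of $[1/c,c]$, so both directions must be checked to require $\Omega(w(v))$ updates; and that densities are ratios of (approximate) weights to subtree sizes, so one must track how both numerator and denominator move. It is exactly here that size-balance, the constant-factor weight balance among brothers, and the weight approximation~(\ref{eq:vw}) are used; the remaining accounting is then routine.
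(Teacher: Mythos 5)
Your proposal is correct and follows essentially the same route as the paper's proof: a gap claim that after redistribution of $v$ the child criticalities equal $1$ and $\Omega(w(v))$ updates in its subtree are needed before Invariant~\ref{inv:dens} is violated again, followed by charging the $O(f(w(v)))$ redistribution cost to those updates and summing $O\bigl(f(w(v))/w(v)\bigr)$ over the at most $H$ ancestors of the updated leaf. Your write-up is merely more explicit about details the paper leaves implicit (resets by ancestor redistributions, the use of the virtual-weight approximation, and the need for $f(x)/x$ to be non-decreasing when bounding each term by $f(n)/n$).
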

\begin{proof}
If a node $v$ with weight $w(v)$ has the elements in its subtree redistributed, then this node will go
through this process again after $O(w(v))$ updates of elements in its subtree. In particular, when $v$
is redistributed the criticality $c(p,q)$ of its children $p$, $q$ is $1$. To move the criticality out
of bounds again at least $\frac{w(p)}{2}$ or $\frac{w(q)}{2}$ elements must be inserted or deleted from
$p$ or $q$ respectively. By the assumption that the number of nodes in the subtree of $p$ is approximately
equal (within constant factors) to that of $q$, we deduce that $O(w(v))$ elements must be inserted or
deleted from $v$. Since the cost of the redistribution of $v$ is $O(f(w(v)))$, the amortized cost
for node $v$ is $O\left(\frac{f(w(v))}{w(v)}\right)$. This is true for all nodes on the path from
a leaf to the root, and thus the amortized cost is $O\left(H\frac{f(w(root))}{w(root)}\right)$.
\qed
\end{proof}

\section{The $D^2$-tree} \label{sec:d2}
In this section we design and analyze the {\em $D^2$-tree} overlay.
We first describe the overlay structure, then move to the description of the index, and
finally discuss efficiency issues regarding congestion and fault-tolerance.

\subsection{The $D^2$-tree Structure} \label{ssec:overlay}

The $D^2$-tree is a binary tree, where each node maintains an additional set of links to other
nodes apart from the standard links which form the tree. Each node $v$ in the tree maintains the following links:
\begin{enumerate}

 \item Links to its father (if there is one) and its children.

 \item Links to its adjacent nodes based on an inorder traversal of the tree.

 \item Links to nodes at the same level as $v$. These links facilitate an exponential search on the nodes of the same level. Assume that node $v$ lies at level $\ell$. In a binary tree, the maximum number of nodes at level $\ell$ is equal to $2^{\ell}$. Node $v$ maintains at most $2\ell$ links: $\ell$ links to nodes to the right and $\ell$ links to nodes to the left. The links are distributed in exponential steps, that is the first link points to a node (if there is one) $2^{0}$ positions to the left (right), the second $2^{1}$ positions to the left (right), and the $i$-th link $2^{i-1}$ positions to the left (right). These links constitute the \textit{routing table} of $v$.
\end{enumerate}
The next lemma captures some important properties of the routing tables w.r.t. their construction.
It follows immediately from the aforementioned link structure and the fixed distances between successive links in the routing tables.
\begin{lemma}\label{lem:properties}
(i) If a node $v$ contains a link to node $u$ in its routing table, then the parent of $v$ also contains a link to
the parent of $u$, unless $u$ and $v$ have the same father.
(ii) If a node $v$ contains a link to node $u$ in its routing table, then the left (right) sibling of $v$ also contains
a link to the left (right) sibling of $u$, unless there are no such nodes.
(iii) Every non-leaf node has two adjacent nodes in the inorder traversal, which are leaves.
\end{lemma}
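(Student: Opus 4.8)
The plan is to reduce everything to arithmetic on the left-to-right numbering of the nodes inside a single level. Fix a level $\ell$, number the nodes of that level $0,1,\dots,N_\ell-1$ from left to right, and write $p(v)$ for the number of a level-$\ell$ node $v$. Two elementary facts do all the work. First, by the definition of the routing tables, a level-$\ell$ node with number $p$ stores a link to the level-$\ell$ node with number $p'$ exactly when $|p-p'|=2^{i-1}$ for some $i\in\{1,\dots,\ell\}$ and $0\le p'\le N_\ell-1$. Second, since the tree is binary (and perfect on its upper part), the father of the level-$\ell$ node with number $p$ is the level-$(\ell-1)$ node with number $\lfloor p/2\rfloor$; note that a node carrying any routing link has $\ell\ge1$, so its father exists. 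For (i), let $v$ be at level $\ell$, number $p$, and suppose its routing table links to $u$ at number $p'=p+2^{i-1}$ (the case $p'=p-2^{i-1}$ is symmetric). If $i\ge2$, then $2^{i-1}$ is even, so $\lfloor p'/2\rfloor=\lfloor p/2\rfloor+2^{i-2}$: the father of $u$ lies $2^{i-2}$ positions to the right of the father of $v$, and since $1\le i-1\le\ell-1$ and all positions involved are valid, the first fact applied at level $\ell-1$ gives the desired link. If $i=1$, then $p'=p+1$: when $p$ is even, $\lfloor p'/2\rfloor=\lfloor p/2\rfloor$, so $u$ and $v$ have the same father, which is exactly the stated exception; when $p$ is odd, $\lfloor p'/2\rfloor=\lfloor p/2\rfloor+1$, so the fathers are neighbours at level $\ell-1$ and the first fact (with $i=1$) again yields the link.

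For (ii), keep $v$ at number $p$ and $u$ at number $p'$ with $|p-p'|=2^{i-1}$, $i\le\ell$. The left sibling of $v$ at level $\ell$ is the node at number $p-1$ (when it exists) and the left sibling of $u$ is the node at number $p'-1$; their difference of numbers is still $2^{i-1}$, so by the first fact the former stores a link to the latter whenever both nodes exist, and the right-sibling case (numbers $p+1$ and $p'+1$) is identical. The only obstruction is that one of the nodes named fails to exist, which is precisely the escape clause of the statement; put differently, the link pattern at a fixed level is invariant under shifting all node numbers by one, and (ii) is exactly that invariance.

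For (iii), let $v$ be a non-leaf node. The first and the last node visited by an inorder traversal are reached by descending only to left children, resp.\ only to right children, and hence are leaves; so the internal node $v$ is neither first nor last, and it has an inorder predecessor $u^-$ and an inorder successor $u^+$. Here $u^-$ is the rightmost node of the (non-empty) left subtree of $v$ and $u^+$ the leftmost node of the right subtree; walking to $u^-$ one repeatedly descends to the right child until no right child exists, and in a binary tree whose internal nodes have two children such a terminal node is a leaf, so $u^-$ is a leaf, and symmetrically so is $u^+$. This gives (iii).

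The statement is shallow — the quoted ``follows immediately'' is accurate — and the only places needing attention are the $i=1$ subcase of (i), where the parity of $p$ separates the ``same father'' exception from the ``adjacent fathers'' situation, and the routine bookkeeping of which positions are actually occupied, which the various ``unless'' clauses absorb. The one conceptual observation is that halving a position (passing to the father) turns a $2^{i-1}$-offset into a $2^{i-2}$-offset while leaving same-level offsets unchanged.
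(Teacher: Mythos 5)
Your proof is correct, and it is essentially the argument the paper has in mind: the paper gives no proof at all, asserting that the lemma ``follows immediately from the aforementioned link structure and the fixed distances between successive links in the routing tables,'' and your position-arithmetic on the left-to-right numbering (halving positions for fathers, shifting by one for siblings, and the standard inorder predecessor/successor argument for part (iii)) is exactly the routine verification being waved at. The one detail worth having written down explicitly is the $i=1$ parity split in part (i), which is where the ``same father'' exception comes from, and you handle it correctly.
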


\subsubsection{A Weight-Balanced Overlay.} \label{ssec:weighted}

The overlay consists of two levels. The upper level of the overlay is a Perfect Binary Tree (PBT). The leaves of the tree are representatives of buckets that constitute the lower level of the overlay. Each \textit{bucket} is a set of $O(\log{N})$ nodes and it is structured as a doubly linked list. Each node of the bucket points to the node which is a leaf of the PBT and is called the \textit{representative} of the bucket. Additionally, it maintains its routing table w.r.t. the nodes of all buckets.

When a node $z$ makes a join request to $v$, then this node is forwarded to its adjacent leaf $u$ w.r.t.~the inorder traversal.
Then, node $z$ is added to the doubly linked list representing the bucket of $u$ by manipulating a constant number of links. The routing table of $z$ is updated by using Lemma~\ref{lem:properties}(ii). When a node $v$ leaves the network, then it is replaced by its right adjacent node $u$ (if there is no right adjacent node then we choose the left one) which in turn is replaced by its first node $z$ in its bucket (Figure~\ref{fig:joinleave}). Link and data information are copied from $v$ to $u$ and from $u$ to $z$. When a node $v$ is discovered to be unreachable, its adjacent node $u$ is first located. This is accomplished by traversing the path to the rightmost or leftmost leaf starting from the left or right child respectively. Node $u$ fills the gap of $v$ and the first child $z$ in the bucket of $u$ fills the gap left by $u$. The contents of $u$ are not moved to another node except from the navigation data (routing tables and other links) which are moved to node $z$ that takes its place. Node $u$ has its routing tables recomputed.

\begin{figure}[ht]
	\centering
	\includegraphics[scale=0.6]{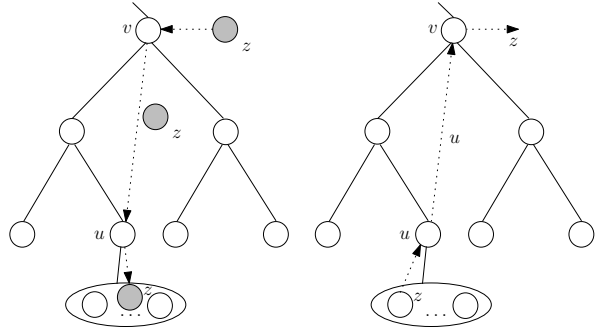}
  \caption{To the left (right) the join of $z$ (leave of $v$) is depicted. The dotted labeled arrows represent the movement of the nodes denoted by the label.}
  \label{fig:joinleave}
\end{figure}

The join and departure of nodes may cause the size of the buckets to be uneven, which in the long run
renders the structure unbalanced (imagine a bucket holding almost all nodes). To control the size of
the buckets we employ a weight-based approach\footnote{The alternative of following a height-based
approach, resulting in a height (instead of weight) balanced overlay, would render
update operations inefficient.}. Each node $v$ of the PBT maintains its size $|v|$,
which is equal to the number of nodes in the buckets of its subtree. The size control is accomplished
by using the method introduced in Section~\ref{ssec:rebalance}, in order to avoid the existence of hotspots.

Recall that the \textit{node criticality} $nc_v$ of a node $v$ at level $\ell$ with left and right children $w$ and $z$
at level $\ell+1$, respectively, is defined as $nc_v=\frac{|w|}{|v|}$.
The following invariant bounds the criticality of nodes.
\begin{invariant} \label{inv:criticality}
 The node criticality of all nodes is in the range $\left[\frac{1}{4},\frac{3}{4}\right]$.
\end{invariant}
Invariant~\ref{inv:criticality} implies that the number of nodes in buckets in the left
subtree of a node $v$ is at least half and at most twice the corresponding number of its
right subtree (this definition can be easily generalized when $v$ has a $O(1)$ number
of children).

When an update takes place at bucket $x$, then we locate the highest ancestor $v$ of $x$
whose node criticality is out of bounds, w.r.t. Invariant~\ref{inv:criticality}, and we
redistribute the nodes in its subtree.


The redistribution is carried out as follows. A traversal of all buckets of the subtree of $v$ at level $\ell$ is performed in order to determine the exact value of $|v|$. Then, the number of nodes per bucket should be $\left\lfloor \frac{|v|}{2^{\ell}}\right\rfloor+1$. The redistribution of nodes in the subtree of $v$ starts from the rightmost bucket and it is performed in an inorder fashion so that elements in the nodes are not affected. The transfer of nodes is accomplished by maintaining a link (called {\em dest} henceforth) for the position in which nodes should be put or taken from. In addition, this pointer plays the role of a token indicating which node implements the redistribution process. The transfer process involving bucket $b$ is implemented by its representative that maintains the pointer \emph{dest}.

 Assume that bucket $b$ has $q$ extra nodes which must be transferred to other buckets. Pointer \emph{dest} points to a bucket $b'$ in which these extra nodes should be put. All these nodes are put in $b'$ as well as in adjacent nodes if necessary. Note that during this procedure internal nodes of PBT are also updated since \emph{dest} implements an inorder traversal following the respective pointers. When bucket $b$ has the correct size, the link \emph{dest} is transferred to the representative of the next bucket and the same procedure applies again. In each visited bucket there are nodes which have been transferred and are in their correct position and there are nodes which are to be transferred. The distinction between these nodes is quite easy by the total number of nodes in the bucket as well as by the keys they contain. The case where $q$ nodes must be transferred to bucket $b$ from bucket $b'$ is completely symmetric. The cost for the redistribution for node $v$ is $f(|v|)=O(|v|)$.

The redistribution guarantees that if there are $z$ nodes in total in the $y$ buckets of the subtree of $v$, then after the redistribution each bucket maintains either $\lfloor z/y \rfloor$ or $\lfloor z/y \rfloor +1$ nodes. However, the following discussion still holds (with minor changes) even if the redistribution phase guarantees that the minimum and maximum size of the buckets is within constant factors. The cost for the redistribution we propose for node $v$ is $f(|v|)=O(|v|)$.

We guarantee that each bucket contains $O(\log{N})$ nodes, throughout joins or departures of nodes,
by employing two operations on the PBT, the \textit{contraction} and the \textit{extension}.
When a redistribution takes place at the root of the PBT, we also check whether any of these two operations can be applied to the PBT. The extension operation adds one more level of nodes at the PBT from existing nodes in the buckets, thus increasing its height by one. The contraction operation removes one level of nodes from the PBT and puts them into the buckets, thus decreasing its height by one. In order to decide whether the PBT needs extension or contraction we compare the size of the buckets $B$ after the redistribution with the height of the PBT. Note that after redistribution, the sizes of all buckets may differ by at most $1$. If the size is larger than the height of the PBT by at least $1$ then an extension takes place. If the size of the bucket is smaller than the height of the PBT by at least $1$ then a contraction takes place (see Figure~\ref{fig:pbtops}). The height of the PBT can be deduced by the size of the routing table in the nodes of the last level of the PBT. These two operations involve a reconstruction of the overlay which rarely happens as shown in the following lemma.

\begin{figure}[ht]
	\centering
	\includegraphics[scale=0.6]{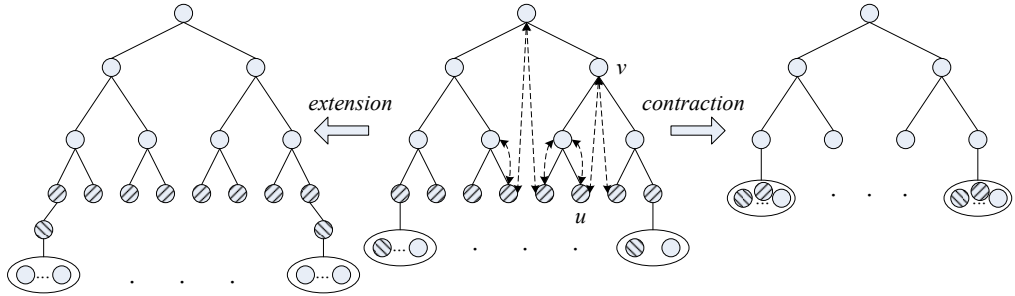}
  \caption{In the middle, the structure of the weight balanced overlay is depicted. To the left (right) is the result of the application of an extension (contraction) operation.}
  \label{fig:pbtops}
\end{figure}

\begin{lemma} \label{lem:weight}
If a redisribution operation is performed at a node with size $s$, then this node will be redistributed again
after $\Omega(s)$ joins or departures have been performed in its subtree.
\end{lemma}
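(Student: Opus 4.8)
The plan is to mimic, for node sizes, the amortized charging argument already used for weights in Lemma~\ref{lem:rebalancing} and Theorem~\ref{thm:redistcost}. Fix a PBT node $v$ at height $k$; since the PBT is perfect, its subtree holds exactly $2^{k}$ buckets, $2^{k-1}$ of them under its left child $w$ and $2^{k-1}$ under its right child $z$. The first step is to pin down the configuration right after a redistribution is performed at $v$: every bucket of the subtree of $v$ receives one of the two consecutive values $\lfloor |v|/2^{k}\rfloor$ and $\lfloor |v|/2^{k}\rfloor+1$, and the stored sizes of all PBT nodes of the subtree are recomputed exactly. Hence $|w|$ and $|z|$ are each a sum of $2^{k-1}$ bucket sizes that differ pairwise by at most one, so $|w|/|z|\in[\,B/(B+1),(B+1)/B\,]$ with $B=\lfloor|v|/2^{k}\rfloor\ge 1$; thus the node criticality $nc_v=|w|/|v|$ starts in $[\tfrac13,\tfrac23]$ --- equal to $\tfrac12\pm o(1)$ in the normal case where buckets hold $\Theta(\log N)$ nodes --- and in any event bounded away from $\tfrac14$ and $\tfrac34$ by a positive constant.

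The second step bounds how fast $nc_v$ can leave $[\tfrac14,\tfrac34]$. Only a join or a departure inside the subtree of $v$ can change $|v|,|w|,|z|$, and each such update changes $|v|$ by $\pm1$ and exactly one of $|w|,|z|$ by $\pm1$; a redistribution at a proper descendant of $v$ changes none of the three, while a redistribution at an ancestor of $v$ (or an extension/contraction of the PBT) merely re-establishes the balanced configuration above, so it suffices to count the updates since the most recent such event, during which $|v|=\Theta(s)$ as long as this count is $o(s)$. Among those updates let $a$ be the number of joins under $w$ and $b$ the number of departures under $z$; the remaining updates can only move $nc_v$ away from the boundary $\tfrac34$, so $|w|\le|w_\star|+a$ and $|z|\ge|z_\star|-b$, where $|w_\star|,|z_\star|=\Theta(s)$ are the sizes at that event. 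Since $nc_v\ge\tfrac34$ means $|w|\ge 3|z|$, this forces $a+3b\ge 3|z_\star|-|w_\star|$, which is $\Omega(s)$ because $|w_\star|/(|w_\star|+|z_\star|)$ is a constant below $\tfrac34$; hence $a+b\ge\tfrac13(a+3b)=\Omega(s)$, and the boundary $\tfrac14$ is symmetric. Therefore $\Omega(s)$ joins or departures must occur in the subtree of $v$ before it is redistributed again.

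The one delicate point, which I expect to absorb most of the care, is that Invariant~\ref{inv:criticality} is tested on the lazily maintained (virtual) sizes of Section~\ref{ssec:rebalance} rather than on the exact ones, so in the estimates above $|v|,|w|,|z|$ should really be their stored approximations, which by the size analogue of (\ref{eq:vw}) match the true values only up to a multiplicative constant. To keep that constant from swallowing the analysis I would take the parameters $\epsilon_h$ of the weight-updating scheme to be a small enough constant times $1/h^{2}$ --- the rescaling being exactly the freedom noted in the footnote of Section~\ref{ssec:rebalance} --- so that every stored size stays within a factor $1\pm\delta$ of its true size for an arbitrarily small constant $\delta$; rerunning the short estimate of the previous paragraph with this $1\pm\delta$ distortion only degrades the constant hidden in $\Omega(s)$, and the lemma follows.
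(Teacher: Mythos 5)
Your proof is correct and follows essentially the same route as the paper's: after a redistribution at $v$ the criticality $nc_v$ is (approximately) $\frac12$, and since each join or departure in the subtree changes $|w|$, $|z|$, $|v|$ by at most one, driving $nc_v$ out of $[\frac14,\frac34]$ requires $\Omega(s)$ such updates. Your additional observation that Invariant~\ref{inv:criticality} is tested on lazily maintained sizes, so the parameters $\epsilon_h$ must be rescaled to keep the stored sizes within $1\pm\delta$ of the true ones, is a legitimate point that the paper's short proof silently glosses over.
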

\begin{proof}
Assume that node $v$ with size $s$ is redistributed. Then, $nc_v=0.5$, meaning that the number of nodes in the buckets for both subtrees are equal. The bound of $0.5$ on criticality after redistribution is not strict in the sense that any bound in the interval $[\frac{1}{4}+\zeta, \frac{3}{4}-\zeta]$, where $\zeta>0$, suffices. The same holds for their subtrees recursively. Node $v$ will be redistributed again only when the criticality of one of its children gets out of bounds. Since it was $0.5$ at least $s/4$ joins or departures of nodes must be performed in order to redistribute $v$. This is a worst-case sequence of operations that trigger a redistribution at $v$. Assuming a uniform distribution of updates, a much larger bound can be obtained. 
\qed
\end{proof}
Lemma~\ref{lem:weight} states that the expensive operations of extension and contraction take place when the number of nodes has at least doubled or halved.
Assuming that the redistribution of $v$ has $O(f(|v|))$ cost, it follows by Lemma~\ref{lem:weight} that the amortized cost for join/departure of a node $v$
at height $h$ is $O\left(\frac{f(|v|)}{|v|}\right)$. Since the PBT has height $H$, we establish the following.

\begin{lemma} \label{lem:migrate}
The amortized cost of join/departure of a node $v$ is $O\left(H\frac{f(N)}{N}\right)$.
\end{lemma}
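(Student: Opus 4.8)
The plan is to prove this by amortized analysis, charging the cost of each redistribution to the joins and departures that forced it, in the same spirit as Theorem~\ref{thm:redistcost}. First I would record what a single operation does: a join/departure takes place in one bucket $x$ and, by construction, triggers a redistribution at exactly one PBT node -- the highest ancestor $v$ of $x$ that violates Invariant~\ref{inv:criticality} -- at a cost of $O(f(|v|))$. (There is also the $O(\log N)$ spent on fixing links via Lemma~\ref{lem:properties}(ii), the $O(1)$-amortized cost of keeping the sizes $|v|$ current with the lazy scheme of Section~\ref{ssec:rebalance} by Theorem~\ref{thm:wupdate}, and the cost of the occasional extension/contraction; the last is $O(N)$ but, by Lemma~\ref{lem:weight} applied at the root with $s=N$, happens only once per $\Omega(N)$ operations, hence $O(1)$ amortized. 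Since the redistribution we use has $f(|v|)=O(|v|)$, all these are within the $O(H\,f(N)/N)=O(\log N)$ we are after.)

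Next I would invoke Lemma~\ref{lem:weight}: after $v$ of size $|v|$ has been redistributed it is not redistributed again until $\Omega(|v|)$ further joins/departures have occurred inside its subtree, so its $O(f(|v|))$ cost can be spread over those $\Omega(|v|)$ operations, charging $O(f(|v|)/|v|)$ to each of them on account of $v$. A join/departure at bucket $x$ lies in the subtree of exactly the $H+1$ PBT nodes on the path from $x$ to the root and of no other PBT node, so over the whole sequence each operation is charged at most once per level, giving an amortized cost of $\sum_{\ell=0}^{H}O\!\bigl(f(|v^{(\ell)}|)/|v^{(\ell)}|\bigr)$, where $v^{(\ell)}$ is its ancestor at level $\ell$. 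Since $f(|v|)=O(|v|)$, each of the $H+1$ terms is $O(1)=O(f(N)/N)$ and the sum is $O\!\bigl(H\,f(N)/N\bigr)$, which is the claimed bound (more generally it suffices that $f(x)/x$ be non-decreasing, so that $f(|v^{(\ell)}|)/|v^{(\ell)}|\le f(N)/N$).

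The step I expect to be the main obstacle is the charging itself: a priori the ``$\Omega(|v|)$ operations between consecutive redistributions of $v$'' from Lemma~\ref{lem:weight} could be eroded by redistributions happening at other nodes, because redistributing an ancestor $u$ of $v$ reshuffles the buckets of $u$'s subtree and so alters $|v|$ and $nc_v$; one must also justify charging a single operation to all $H+1$ of its ancestors at once. I would make this airtight with the potential $\Phi=\sum_{v}\Phi(v)$, $\Phi(v)=\alpha\,\tfrac{f(|v|)}{|v|}\,\bigl|\,|w|-\tfrac12|v|\,\bigr|$, with $w$ the left child of $v$ and $\alpha$ a suitably large constant. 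Indeed $\Phi(v)=0$ right after $v$ is redistributed; each join/departure inside $v$'s subtree changes $|w|-\tfrac12|v|$ by at most $\tfrac12$ and thus raises $\Phi(v)$ by only $O(f(|v|)/|v|)$; a redistribution outside $v$'s subtree leaves both $|v|$ and $|w|$ -- hence $\Phi(v)$ -- untouched, while a redistribution inside it only drives $nc_v$ toward $\tfrac12$ and so cannot increase $\Phi(v)$; and a violation of Invariant~\ref{inv:criticality} at $v$ forces $\bigl|\,|w|-\tfrac12|v|\,\bigr|\ge\tfrac14|v|$, so $\Phi(v)\ge\tfrac{\alpha}{4}f(|v|)$, which for $\alpha$ large enough pays for the $O(f(|v|))$ redistribution while $\Phi(v)$ drops back to $0$. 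Consequently the amortized cost of a join/departure is bounded by the increase it causes in $\Phi$, namely $\sum_{\ell=0}^{H}O(f(|v^{(\ell)}|)/|v^{(\ell)}|)=O(H\,f(N)/N)$, completing the proof.
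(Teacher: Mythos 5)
Your proposal is correct and follows essentially the same route as the paper, which likewise derives the bound by combining Lemma~\ref{lem:weight} (each redistribution of a node of size $|v|$ is paid for by the $\Omega(|v|)$ joins/departures in its subtree since the last one, giving $O(f(|v|)/|v|)$ amortized per node) with a sum over the $H$ ancestors on the update path. Your added potential-function argument only formalizes this charging more carefully than the paper does; it is a welcome tightening but not a different proof.
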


\subsubsection{$O(1)$ Space per Node.}
The routing tables require $O(\log{N})$ space for each node. To make the space consumption constant, one could
apply on the overlay the schemes described in \cite{gns06,zh04}. However, on the one hand the complexities will
not be deterministic while on the other hand even in the case of Strong Rainbow Skip Graphs \cite{gns06} with
deterministic bounds our congestion for searching is much better than theirs. To achieve constant space we
distribute the routing tables to many nodes doing the same also for nodes in the buckets.
A set of nodes with constant degree is grouped together and a routing table is distributed on all these nodes,
such that each node uses constant space. Thus, a node can recreate approximately its routing table by
accessing nodes inside the same group. We call each such group a \textit{hypernode}.

A hypernode at level $\ell$ consists of at most $\ell$ nodes, numbered from left to right $1,2,\ldots$.
This number is the {\em rank} of the node within the hypernode. A node $v$ with rank $i$ maintains two links
to the nodes that are approximately $2^i$ positions to the right and to the left. In particular, node $v$ either points
to a node $z$ in the same hypernode whose distance is $2^i$ or to a node $z'$ whose rank is $i$ and lies in a different
hypernode than that of $v$ which contains a node whose distance is $2^i$ from $v$.
The concatenation of all such links constitutes the routing table for the hypernode. Additionally, each
node with rank $i$ maintains two links to nodes with ranks $i-1$ and $i+1$, if there are such nodes.
Finally, each node with rank $i$ in the hypernode maintains a link to the node with the largest rank.
The following lemma translates Lemma~\ref{lem:properties}(ii) in the setting of hypernodes.
\begin{lemma}\label{lem:hyper_routing_con_sibling}
If node $v$ contains a link to node $u$, then the left (right) sibling of $v$ also contains a link
to the left (right) sibling of $u$, unless $\nexists$ such nodes.
\end{lemma}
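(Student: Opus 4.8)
The plan is to mimic the proof of Lemma~\ref{lem:properties}(ii) but carried out at the level of hypernodes and ranks, exploiting the fact that the hypernode structure is a refinement of the level structure of the PBT. First I would set up notation: fix a node $v$ of rank $i$ in a hypernode $\mathcal H_v$ at level $\ell$, and suppose $v$ holds a routing link to a node $u$. By the construction of the hypernode routing table, $u$ is the unique node of rank $i$ (in $\mathcal H_v$ or in another hypernode at level $\ell$) such that there is a node at distance exactly $2^i$ from $v$ along the left-to-right ordering of the nodes at level $\ell$; equivalently, $u$ is the node of rank $i$ in the hypernode that is $2^i$ hypernode-positions (or the appropriate number of node-positions, per the construction) away from $v$ in the fixed left-to-right ordering. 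The essential point is that this target is determined solely by the rank $i$ and the integer offset $2^i$ applied to the \emph{global} left-to-right sequence of same-level nodes, which is exactly the sequence underlying Lemma~\ref{lem:properties}(ii).

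Next I would argue the shift-invariance. Let $v'$ be the right sibling of $v$ (the case of the left sibling is symmetric). Since $v$ and $v'$ are consecutive at level $\ell$, $v'$ occupies the position immediately after $v$ in the global same-level ordering. Because hypernodes are contiguous blocks of this ordering and the rank is the within-block index, $v'$ has the same rank $i$ as $v$ precisely when $v$ is not the last node of $\mathcal H_v$; when $v$ is the last node of its hypernode, $v'$ is the first node of the next hypernode and the routing-table entry at offset $2^i$ simply may not exist — which is the ``unless $\nexists$ such nodes'' clause. In the generic case, applying the offset $2^i$ to the position of $v'$ lands exactly one position to the right of applying it to $v$, i.e. on the right sibling $u'$ of $u$, and $u'$ again has rank $i$ for the same reason. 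Hence the routing rule for rank $i$ nodes forces $v'$ to store a link to $u'$, which is what we need. Throughout, I would invoke Lemma~\ref{lem:properties}(ii) as the underlying ``sibling-shift'' fact about the same-level ordering, so that the only new content is bookkeeping about how ranks behave under a unit shift.

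The main obstacle, and the part that needs care rather than routine calculation, is the boundary bookkeeping: a unit shift can move a node across a hypernode boundary, changing its rank, and the offset $2^i$ can likewise move the target across a hypernode boundary (this is exactly why the construction says ``$z$ in the same hypernode \ldots or $z'$ in a different hypernode''). I would handle this by a short case analysis on whether $v$ (resp.\ $u$) is the last node of its hypernode: in the interior case ranks are preserved under the unit shift and the argument goes through verbatim; in the boundary case either the claimed sibling $v'$ lies in a new hypernode with a fresh rank numbering — in which case either the needed link still exists by the construction's cross-hypernode clause, or it genuinely does not exist and we fall into the stated exception. I do not expect any quantitative estimates here; the whole lemma is a structural consequence of Lemma~\ref{lem:properties}(ii) together with the definition of rank as position-within-hypernode, and the proof is essentially the observation that a unit shift of $v$ induces a unit shift of $u$ while preserving the rank-$i$ routing rule, modulo the boundary exceptions already flagged in the statement.
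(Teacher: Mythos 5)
Your overall strategy matches the paper's, which disposes of this lemma in a single sentence citing exactly the two ingredients you identify: the fixed (power-of-two) distances between successive routing links and the rank structure inside hypernodes. However, your detailed expansion hinges on a claim that contradicts the paper's definitions. You assert that the right sibling $v'$ of a rank-$i$ node $v$ ``has the same rank $i$ as $v$ precisely when $v$ is not the last node of $\mathcal{H}_v$.'' The paper defines the rank as the left-to-right position \emph{within} the hypernode, so the node immediately to the right of a rank-$i$ node in the same hypernode has rank $i+1$, not rank $i$. Since the routing offset of a node is determined by its rank ($\approx 2^{\mathrm{rank}}$), the adjacent node's single routing link points roughly $2^{i+1}$ positions away rather than $2^i$, and your ``unit shift of $v$ induces a unit shift of $u$ while preserving the rank-$i$ routing rule'' step collapses: under the literal reading of ``sibling'' as the adjacent same-level node, the sibling of $v$ does \emph{not} carry a link at the same offset, so it cannot point to the sibling of $u$.

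To repair this you must settle what ``sibling'' means in the hypernode setting. The reading that makes the lemma true (and that supports its later use for reconstructing a faulty node's links from its neighbours) is that the relevant comparison is between nodes of the \emph{same rank} in adjacent hypernodes: the rank-$i$ node of hypernode $V$ and the rank-$i$ node of the neighbouring hypernode $V'$ both apply the offset $2^i$ to positions that differ by a fixed displacement (the hypernode width), so their targets differ by that same displacement, which is the shift-invariance you were after. Your boundary case analysis and your appeal to Lemma~\ref{lem:properties}(ii) then go through with ``unit shift'' replaced by ``shift by one hypernode.'' As written, though, the central step of your argument is false under the paper's definition of rank, so the proposal has a genuine gap rather than merely missing bookkeeping.
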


\begin{proof}
Direct implication of the distances between successive links in the routing tables as well as of the
increasing ranks in the hypernodes.
\qed
\end{proof}

Using Lemma~\ref{lem:hyper_routing_con_sibling} we can update the links of a node $v$ by simply looking at the links of its siblings $u$ and $w$ and update the links of $v$ by pointing to the adjacent nodes of the nodes pointed to by $u$ and $w$. Hypernodes are static in the overlay and only in the case of contraction we destroy the hypernodes of the last level while in the case of extension we create new hypernodes for the new level. A faulty node inside a hypernode will not disconnect it since by accessing the parents we can find its siblings and reconstruct the missing routing information.

\subsection{The Index Structure of the $D^2$-tree}
\label{ssec:index}

The overlay provides the infrastructure for the index to efficiently support various operations. The overlay is used as a node-oriented tree. The range of all values stored in the overlay is partitioned into subranges each one of which is assigned to a node of the overlay. An internal node $v$ with range $[x_v,x'_v]$ may have a left child $u$ and a right child $w$ with ranges $[x_u,x'_u]$ and $[x_w,x'_w]$ respectively such that $x_u<x_u'<x_v<x_v'<x_w<x_w'$. Thus, if an element $x \in [x_v,x'_v]$ then it must be stored at node $v$. Ranges are dynamic in the sense that they depend on the values maintained by the node.

 In the following, we discuss the search and update operations supported by the index. Our arguments refer
to the case where nodes use $O(\log{N})$ space but they can be trivially changed to hold in the case they
use $O(1)$ space. In the few cases where these arguments do not transfer trivially we make further explanations.

\subsubsection{Search and Range Queries.}
\label{ssec:congestion}
The search for an element $\alpha$ in the overlay may be initiated from any node $v$ at level $\ell$.
Let $z$ be the node with range of values containing $\alpha$. Assume $O(\log{N})$ space per node
and assume that w.l.o.g.~$x'_v<\alpha$. Then, by using the routing tables we search at level $\ell$
for a node $u$ with right sibling $w$ (if there is such sibling) such that $x'_u<\alpha$ and $x_w>\alpha$ unless
$\alpha$ is in the range of $u$ and the search terminates. This step has $O(\ell)$ cost, since we simulate a binary
search. If the search continues, then node $z$ will either be an ancestor of $u$ or in the subtree rooted at $u$.
If $u$ is a leaf, then we move upwards (or in its corresponding bucket) until we find node $z$ in $O(\log{N})$ steps. If $u$ is an internal node,
by following the respective link we move to the left adjacent node $y$ of $u$ which is certainly a leaf (inorder
traversal). If $x'_y>\alpha$ then an ordinary top down search from node $u$ will suffice to find $z$ in $O(\log{N})$
steps (or in its bucket). Otherwise, node $z$ is certainly an ancestor of $u$ and thus we can move upwards from $u$ until we find
it in $O(\log{N})$ steps. The following lemma establishes the complexity of the search operation.

\begin{lemma} \label{lem:search}
The search for an element $\alpha$ in a $D^2$-tree of $N$ nodes is carried out in $O(\log{N})$ steps.
\end{lemma}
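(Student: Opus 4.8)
The plan is to verify that the informal routing argument given just before the lemma statement actually terminates in $O(\log N)$ messages, by carefully bounding each of its three phases: (1) the horizontal exponential search at level $\ell$, (2) the possible walk up the tree to an ancestor, and (3) the possible top-down search into a subtree (or the walk inside a bucket). The structural facts I would lean on are the exponential spacing of routing-table links (each node at level $\ell$ has $\le 2\ell$ links covering distances $2^0,2^1,\dots,2^{\ell-1}$ on each side), Lemma~\ref{lem:properties}(iii) (every non-leaf has two leaf neighbours in the inorder traversal), and Invariant~\ref{inv:criticality} together with Lemma~\ref{lem:weight}, which guarantee that the PBT has height $H=O(\log N)$ and each bucket has $O(\log N)$ nodes.

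First I would handle phase (1): starting from $v$ at level $\ell$, we want the node $u$ at level $\ell$ whose range immediately precedes $\alpha$ (or contains it). Because the routing table gives links at distances $1,2,4,\dots,2^{\ell-1}$, we can perform an unbounded/exponential search followed by a binary search: jump right in increasing powers of two until we overshoot $\alpha$, then binary-search back. Since there are at most $2^\ell$ nodes at level $\ell$, this costs $O(\ell) = O(\log N)$ messages; I would note that Lemma~\ref{lem:properties}(i)--(ii) is what makes the routing table consistent enough that these jumps land where expected, but the counting itself is just the standard "one-sided binary search" bound. Second, once $u$ is found, phase (2) and (3): the target $z$ is either an ancestor of $u$ or in $u$'s subtree. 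If $u$ is a leaf we either walk up its root path — length $\le H = O(\log N)$ — or search within $u$'s bucket, which has $O(\log N)$ nodes. If $u$ is internal, we hop to its left inorder-adjacent leaf $y$ (one message, by Lemma~\ref{lem:properties}(iii)), compare $x'_y$ with $\alpha$ to decide in $O(1)$ whether $z$ lies below $u$ or is a proper ancestor of $u$, and then either do a top-down search from $u$ (depth $\le H$, with a final $O(\log N)$ probe into a bucket) or walk up the root path from $u$ (length $\le H$). Summing: $O(\log N) + O(\log N) + O(\log N) = O(\log N)$.

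The main obstacle — and the place where I expect the write-up to need the most care — is justifying that the horizontal search in phase (1) really costs $O(\ell)$ rather than something larger when the search is initiated from an arbitrary node $v$ at level $\ell$ (not, say, a node aligned to a power-of-two boundary). One must argue that from any starting position the exponential-then-binary search over a contiguous array of $\le 2^\ell$ cells terminates in $O(\ell)$ probes, and that each probe is realizable by a single routing-table link (or a short chain of them) — here the hypernode variant would need the extra remark that reconstructing a link costs $O(1)$ messages within the group, so the bound degrades only by a constant. A secondary subtlety is the correctness claim that after phase (1), $z$ is guaranteed to be either an ancestor of $u$ or in the subtree of $u$: this follows from the inorder/range structure $x_u < x'_u < x_v < x'_v < x_w < x'_w$, but it should be stated explicitly as the reason the two-way case analysis in phases (2)--(3) is exhaustive. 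Once those points are nailed down, the theorem is just the sum of three $O(\log N)$ terms.
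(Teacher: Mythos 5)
Your treatment of the $O(\log N)$-space case matches the paper's: the same three-phase decomposition (horizontal exponential/binary search at level $\ell$ costing $O(\ell)$, then either a walk up the root path or a top-down/bucket search, each $O(\log N)$), with the same use of the range ordering $x_u<x'_u<x_v<x'_v<x_w<x'_w$ to justify that the target is an ancestor of $u$ or in its subtree. That part is fine. However, the paper's actual proof of this lemma is almost entirely devoted to the $O(1)$-space (hypernode) variant, and this is exactly where your proposal has a genuine gap. You dispose of it with the remark that ``reconstructing a link costs $O(1)$ messages within the group, so the bound degrades only by a constant.'' That claim is false as stated: a node of rank $i$ in a hypernode stores only its two distance-$2^i$ links, links to the nodes of rank $i\pm 1$, and a link to the highest-rank node. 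To fetch the distance-$2^j$ link for an arbitrary $j$ it must walk $|i-j|$ rank-adjacency hops, which is $\Theta(\ell)$ in the worst case; if each of the $O(\ell)$ probes of your binary search paid this, the horizontal phase would cost $O(\ell^2)$, not $O(\ell)$.

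The paper closes this gap with a specific scheduling of the probes: the search first jumps to the highest-rank node of the current hypernode (one message, since every node keeps that link) and then descends monotonically through the ranks, performing the binary search as it goes. When the target is localized between the pointers of ranks $i+1$ and $i$, the remaining subproblem has $2^i$ nodes and $\ell-i$ steps have been spent; the search then moves in $O(1)$ messages to the rank-$i$ node of the next hypernode and resumes descending from rank $i$. Because the rank only ever decreases across the whole horizontal phase, the total number of steps telescopes to at most $\ell$. Your proposal does correctly flag the horizontal phase as the delicate point, but the argument you would need there is this rank-monotone telescoping, not a per-link $O(1)$ reconstruction claim. With that substituted in, the rest of your proof goes through.
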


\begin{proof}
The case of $O(\log{N})$ space per node was analyzed in the paragraph preceding the statement of the Lemma.

In the case of $O(1)$ space per node, assume that node $v$ belongs in the hypernode $V$ at level $\ell$.
The only change concerns the discovery of node $u$ at level $\ell$. By following the respective link, node $p \in V$
with highest rank is reached. Then, by following the backward links we make the search on the level $\ell$. In particular,
assume that during our search in hypernode $V$ we find that node $u$ is somewhere between the nodes pointed to by nodes
with rank $i$ and $i+1$ in $V$. Assume that the node pointed by node with rank $i$ is in hypernode $V'$. This means that
we have narrowed down the search in a subproblem consisting of $2^{i+1}-2^i=2^i$ nodes after having made $\ell-i$ steps
due to the backward exponential search. The procedure is applied again from node with rank $i$ in $V'$ until we find
node $u$. This node in $V'$ can be located in $O(1)$ steps, since the node with rank $i$ of hypernode $V$ maintains a link
to node with rank $i$ in hypernode $V'$. Thus, the number of steps is at most $\ell$. The vertical search
on a path from a node towards the root or a leaf is exactly the same as before.
\qed
\end{proof}

A range query $[a,b]$ reports all elements $x$ such that $x \in [a,b]$. A range query $[a,b]$ initiated at node $v$,
invokes a search operation for element $a$. Node $u$ that contains $a$ returns to $v$ all elements in this range.
If all elements of $u$ are reported then the range query is forwarded to the right adjacent node (inorder traversal)
and continues until an element larger than $b$ is reached for the first time.

\subsubsection{Updates and Load Balancing.}
Assume that an update operation is initiated at node $v$ involving element $\alpha$. By invoking a search operation
we locate node $u$ with range containing element $\alpha$. Finally, the update operation is performed on $u$.
The main issue is how to balance the load to all nodes of the overlay as much equally as possible. To do that we
employ the machinery developed in Section~\ref{sec:load}. Assume that $w$ is the node for which the redistribution
must be applied. It remains to determine how the redistribution will be realized. An implementation of this redistribution follows.

First we make a scan of all nodes in the subtree of $w$ by forwarding a message which simply counts the number of nodes and the number of elements in the subtree. Finally, this message ends up in the leftmost leaf of the subtree of $w$. Thus, $w$ now knows exactly how many elements should be distributed in each node in order to have a uniform load. Then, a data migration procedure is initiated.

 The idea is to migrate the elements to their final destination nodes in a simple step and in an inorder traversal fashion which is facilitated by adjacency links. The link \textit{dest} facilitates the transfer of elements between nodes and at the same time functions as a token which designates the node that implements at the moment the data migration. Starting from the rightmost node of the rightmost bucket in the subtree of $w$, it checks whether the number of elements is less or more than the ideal load. If they are less, then by using the \emph{dest} link the necessary number of elements is transferred from the designated node to the node containing \emph{dest}. If they are more, the necessary number of elements are moved to the node designated by \emph{dest}. If during this procedure the node designated by \emph{dest} fills up (meaning it reaches the desired load) or empties (meaning we transferred a lot of elements) then \emph{dest} is moved to the next node w.r.t. the inorder traversal. When the node containing \emph{dest} has reached its ideal load then \emph{dest} is moved to the next node w.r.t. the inorder traversal and the procedure continues. This procedure requires a linear number of messages w.r.t. the number of elements in the subtree of node $w$.

The cost for the redistribution of a node $v$ is $O(|v|\log{N})$ for the case of $O(\log{N})$ space per node
or $O(|v|)$ for the case of $O(1)$ space per node. This is because, during the transfer of elements the routing
tables must be reconstructed. The following lemma states that the load balancing is efficient in an amortized
sense when the structure is subject to insertions and deletions of elements.
\begin{lemma} \label{lem:complrebal}
 The load rebalancing operation of the index has an amortized cost of $O(\log{N})$.
\end{lemma}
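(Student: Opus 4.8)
The plan is to obtain Lemma~\ref{lem:complrebal} as an instantiation of Theorem~\ref{thm:redistcost}. That theorem already says that, if redistributing a node of weight $w(v)$ costs $O(f(w(v)))$ and the rebalancing is triggered at the father of the highest node violating the density invariant, then the amortized cost per element update is $O\!\left(H\cdot\frac{f(n)}{n}\right)$, where $H$ is the height of the overlay. So three things have to be supplied: a bound on $H$, the function $f$ for the index redistribution, and the translation of $f$ from a size quantity into a weight quantity so that the hypothesis of Theorem~\ref{thm:redistcost} is literally met.

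First I would bound the height: the upper level of the $D^2$-tree is a PBT whose node criticality is kept in $[\tfrac14,\tfrac34]$ by Invariant~\ref{inv:criticality}, hence the PBT has height $O(\log N)$, and each bucket adds only $O(\log N)$ further nodes, so $H=O(\log N)$. Next I would read off $f$ from the implementation described just above the lemma: the data‑migration sweep that equalizes the load in the subtree of the triggering node $v$ costs $O(|v|)$ in the $O(1)$‑space (hypernode) variant, and $O(|v|\log N)$ when full routing tables are kept, the extra factor paying for the routing‑table updates done along the sweep. To feed this into Theorem~\ref{thm:redistcost} I would use that the very purpose of the load‑balancing mechanism — Invariant~\ref{inv:dens} combined with the size balance of the overlay (Invariant~\ref{inv:criticality}) — is to keep each node carrying $\Theta(n/N)$ elements, i.e.\ $w(v)=\Theta\!\left(\frac{n}{N}|v|\right)$ and hence $|v|=O\!\left(\frac{N}{n}w(v)\right)$. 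Thus the redistribution cost is $O\!\left(\frac{N}{n}w(v)\right)$, respectively $O\!\left(\frac{N\log N}{n}w(v)\right)$, i.e.\ a function $f$ that is \emph{linear} in $w(v)$, so the telescoping step in the proof of Theorem~\ref{thm:redistcost} is valid and $f(n)=O(N)$, respectively $O(N\log N)$. Plugging in, the amortized cost is $O\!\left(H\cdot\frac{f(n)}{n}\right)=O\!\left(\frac{N\log N}{n}\right)$ in the $O(1)$‑space model and $O\!\left(\frac{N\log^2 N}{n}\right)$ with full routing tables; since $N\ll n$ both are $O(\log N)$ (the $O(1)$‑space bound needs only $n=\Omega(N)$).

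The step I expect to be the delicate one is the size‑versus‑weight translation, i.e.\ the claim that every node has density $\Theta(n/N)$ so that the $O(|v|)$ redistribution cost may be treated as linear in $w(v)$ (equivalently, so that the per‑node amortized charge $O(f(w(v))/w(v))=O(1/d(v))$ is uniformly bounded along the root path and $f(n)$ is $O(N)$): the density invariant by itself only constrains \emph{siblings} within a constant factor, and one has to argue that this does not accumulate into a large drift between the root density $n/N$ and the density of a deep node. A more self‑contained alternative that sidesteps Theorem~\ref{thm:redistcost} is to argue directly: charge each element update $O(1/d(v))$ (respectively $O(\log N/d(v))$) to each of its $O(H)$ ancestors $v$ — legitimate because $v$ is redistributed at most once per $\Omega(w(v))$ updates in its subtree, at cost $O(|v|)=O(w(v)/d(v))$ — and then bound the sum over the $O(\log N)$ ancestors using that each node holds $\Omega(1)$ elements, which yields $O(\log N)$ (respectively $O(\log^2 N)$) directly.
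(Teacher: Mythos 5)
Your proposal follows the paper's proof exactly: the paper disposes of this lemma in a single line by citing Theorem~\ref{thm:redistcost} together with the stated redistribution cost ($O(|v|)$ for $O(1)$ space per node, $O(|v|\log N)$ for $O(\log N)$ space), which is precisely the instantiation you carry out. The details you supply --- $H=O(\log N)$, the size-to-weight translation via the density $\Theta(n/N)$, and the observation that $N\ll n$ absorbs the remaining $N/n$ (resp.\ $N\log N/n$) factor --- are exactly what the paper leaves implicit, and your flagged concern about density drift is a fair criticism of a gap the paper does not address.
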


\begin{proof}
This is a direct implication of Theorem~\ref{thm:redistcost} and the space used by the nodes.
\qed
\end{proof}

One final comment is that the redistribution of elements may be affected by the redistribution of nodes in the
weight-balanced overlay. In order to avoid such a phenomenon, the redistribution of nodes in the subtree of node
$v$ in the overlay is preceded by a redistribution of elements.

\subsection{Other Efficiency Issues and the Main Result}

We are now ready to tackle the congestion and the fault-tolerance of the $D^2$-tree overlay,
and to present the main results of this work.

\subsubsection{Congestion.}
We assume that a sequence of
searches $s_1,s_2,\ldots,s_N$ is initiated from each of the $N$ nodes of the overlay. We assume
that search $s_i$ is looking for an element residing in a node $z_i$ (target node for $s_i$).
The target nodes $z_1,z_2,\ldots,z_N$ are chosen independently and uniformly at random from all
nodes of the overlay. There are two phases in the search. The first is the horizontal search phase,
which makes use of the routing tables, and the second is the vertical search phase on a path from a
node either towards the root or towards a leaf.

To establish a bound on the congestion, we need to provide bounds on the horizontal and
vertical searches. These bounds are provided by Lemmata~\ref{lem:hsearch} and \ref{lem:vsearch}
below. Before proving these lemmata, we need the following result.

\begin{lemma} \label{lem:destinations}
 The number of searches that stop at a node $v$ at level $\ell$ during the horizontal phase
 of the search is $O(1)$ in expectation.
\end{lemma}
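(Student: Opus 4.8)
The plan is to fix a node $v$ at level $\ell$ and count, over the random choice of target nodes $z_1,\ldots,z_N$, how many of the $N$ horizontal searches can terminate at $v$. A horizontal search initiated at some node terminates at $v$ only if $v$ is the (unique) node at level $\ell$ whose range boundary straddles the target element in the sense described before Lemma~\ref{lem:search}: either the target lies in the range of $v$, or $v$ is the node $u$ at level $\ell$ with $x'_u<\alpha$ and right sibling $w$ with $x_w>\alpha$. In either case the target element lies in the range of $v$ or of one of a constant number of nodes adjacent to $v$ at level $\ell$ — concretely, the target must lie in the subtree rooted at $v$ or at its immediate left/right sibling at level $\ell$ (this is exactly the dichotomy used in the search analysis: after the horizontal phase $z$ is either an ancestor of $u$, or in the subtree of $u$, and the case analysis pins it to $u$ or an inorder-adjacent leaf). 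So the first step is to argue that the set of target nodes that could cause a search to stop at $v$ is contained in $S_v$, the union of the subtree of $v$ with the subtrees of its $O(1)$ siblings at level $\ell$.

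The second step is to bound $|S_v|$ relative to $N$. Here I would invoke Invariant~\ref{inv:criticality}: the node criticality of every node lies in $[\tfrac14,\tfrac34]$, so the sizes of sibling subtrees are within constant factors of each other, and hence $|S_v| = O(|v|)$. More importantly, since each target $z_i$ is chosen uniformly at random among the $N$ nodes, the probability that $z_i \in S_v$ is $|S_v|/N = O(|v|/N)$. Now I would split on the level $\ell$: if $\ell$ is such that $|v|$ is small — say $|v| = O(\log N)$, which covers the bottom $O(\log\log N)$ levels and all the buckets — then even summing over all $N$ searches the expected count is $O(N \cdot |v|/N) = O(|v|) $, which is not yet $O(1)$, so this crude bound is insufficient and I need to be more careful.

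The sharper argument: the expected number of searches stopping at $v$ is $\sum_{i=1}^{N}\Pr[s_i \text{ stops at } v] \le \sum_{i=1}^N \Pr[z_i \in S_v] = N \cdot O(|v|/N) = O(|v|)$, and for a node high in the tree $|v|$ can be $\Theta(N)$, so stated this way the lemma would be false. Therefore the real content must be that a search stops at $v$ during the \emph{horizontal} phase only when the target is in $S_v$ \emph{but not} in the subtree of any node at level $\ell$ that the horizontal search would have reached earlier with the correct boundary — i.e. the horizontal phase at level $\ell$ ends at exactly one node per target, and that node is $v$ only when $z_i$ falls in $v$'s own subtree or the narrow "straddling" gap between $v$ and its sibling. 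The gap between two inorder-consecutive leaves at level $\ell$ contains no nodes, and the subtree of $v$ at level $\ell$ — \emph{when $v$ is the terminal node of a horizontal search} — is visited because the search then transitions to a \emph{vertical} phase; the horizontal phase proper identifies $v$ as a boundary, and a given target determines that boundary uniquely. So $\Pr[s_i\text{ stops horizontally at }v]$ equals the probability that $z_i$ lies in $v$'s range or in $v$'s subtree, and is an ancestor-or-descendant of $v$; summing the \emph{disjoint} events "$z_i$ stops horizontally at $v$" over all level-$\ell$ nodes $v$ gives $1$, hence $\sum_v \mathbb{E}[\#\text{stops at }v] = N$, and by the near-uniformity of subtree sizes at level $\ell$ guaranteed by Invariant~\ref{inv:criticality} the load is spread within constant factors across the (roughly $2^\ell$, or in the bucket case $\Theta(N/\log N)$) nodes at that level, giving $\mathbb{E}[\#\text{stops at }v] = O(1)$ when $2^\ell$ is $\Omega(1)$ fraction of — no, that still needs $2^\ell = \Omega(N)$.

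Let me restate the intended route, which I believe is the honest one: a horizontal search from a node at level $\ell$ is a search \emph{among the nodes at level $\ell$}, and it stops at the node $v$ at level $\ell$ that is the boundary for its target; but crucially the search is initiated from $s_i$'s own node, which is itself at level $\ell$ (searches $s_1,\dots,s_N$ are one per node, across all levels). For a fixed target level and fixed $v$ at level $\ell$, the searches that can stop at $v$ are those initiated at level $\ell$, and there are $O(2^\ell)$ of them; each such search stops at $v$ with probability $O(1/2^\ell)$ by the uniform target choice together with Invariant~\ref{inv:criticality} (each of the $\Theta(2^\ell)$ level-$\ell$ subtrees captures a $\Theta(1/2^\ell)$ fraction of the $N$ leaves/buckets). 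Multiplying, $\mathbb{E}[\#\text{stops at }v] = O(2^\ell)\cdot O(1/2^\ell) = O(1)$. The bucket level is the case $\ell = H$, where there are $\Theta(N/\log N)$ buckets each of $\Theta(\log N)$ nodes; a search can stop at a particular bucket with probability $\Theta(\log N / N)$, and there are $N$ searches, giving expected $O(\log N)$ stops \emph{in the bucket}, i.e. $O(1)$ per node of the bucket once the stopped searches descend into the doubly linked list — so I would phrase the bucket case as "$O(1)$ per node" directly, or fold the bucket into "level $H$" with $2^H = \Theta(N/\log N)$.

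The steps, in order: (1) characterize the terminal node of the horizontal phase as the unique level-$\ell$ boundary node for the target, and note searches stopping at $v$ are those initiated at level $\ell$ (or entering $v$'s bucket); (2) use Invariant~\ref{inv:criticality} to show each level-$\ell$ subtree holds a $\Theta(2^{-\ell})$ fraction of all $N$ nodes, hence $\Pr[z_i$ makes $s_i$ stop at $v] = O(2^{-\ell})$; (3) multiply by the $O(2^\ell)$ searches originating at level $\ell$; (4) handle the bottom (bucket) level separately with the $\Theta(\log N)$ bucket size, concluding $O(1)$ per node. The main obstacle, and the place where I would spend the most care, is step (1): pinning down precisely which searches — as a function of their originating node and random target — can have $v$ as their horizontal terminus, so that the count of candidate searches is $O(2^\ell)$ rather than all $N$. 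Getting this characterization exactly right (it rests on the structure of the routing tables and Lemma~\ref{lem:properties}) is what makes the probability-times-count product collapse to $O(1)$.
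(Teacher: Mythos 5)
Your final distilled argument --- only the $O(2^\ell)$ searches initiated at level $\ell$ can terminate horizontally at a level-$\ell$ node, and each does so at $v$ with probability $O(2^{-\ell})$ because uniformly random targets together with the criticality invariant make the level-$\ell$ subtrees (plus root paths) capture near-equal fractions of the $N$ nodes --- is exactly the paper's argument, which the paper states in three sentences without working out the count-times-probability product explicitly. The detours in your write-up (the $O(|v|)$ overcount, the boundary-gap discussion) are discarded by your own steps (1)--(4), so the proposal is correct and essentially the same approach.
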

\begin{proof}
Since the destinations are chosen uniformly at random, the destination nodes at level $\ell$ for searches starting from this level depend on the weight of each node plus the weight of the nodes on the path to the root which is almost equal. The weight of each node at level $\ell$ is approximately equal for all nodes. Thus, it is expected that $O(1)$ searches will have as a destination any node at level $\ell$.
\qed
\end{proof}

The following lemma bounds the congestion due to the horizontal search.

\begin{lemma} \label{lem:hsearch}
 The horizontal phase of the search at level $\ell$ contributes to congestion $O(\ell)$ in expectation
 at each node of this level.
\end{lemma}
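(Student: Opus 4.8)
The plan is to bound the total number of messages that pass through a fixed node $v$ at level $\ell$ during the horizontal (routing-table) phase, taken over the whole sequence $s_1,\dots,s_N$, and show this is $O(\ell)$ in expectation; since the sequence has one search per node, this is exactly the congestion contribution at $v$. First I would recall the structure of a single horizontal search at level $\ell$: it is a simulated binary search along the nodes of one level, which from Lemma~\ref{lem:search} consists of at most $\ell$ hops, and the nodes it touches are determined by the routing-table link lengths $2^0,2^1,\dots,2^{\ell-1}$. The key combinatorial observation I would isolate is that, because of these fixed exponential strides, a search that \emph{terminates} at some node $u$ (i.e. whose target lies in $u$'s subtree or is $u$ itself) can only route \emph{through} $v$ if $u$ lies within distance $O(2^j)$ of $v$ for the particular stride index $j$ at which $v$ is visited — more precisely, for each ``scale'' $j \in \{0,1,\dots,\ell-1\}$ there are only $O(1)$ termination nodes $u$ at level $\ell$ whose binary search passes through $v$ using a jump of length $2^{j}$. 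So the set of searches that can traverse $v$ is covered by $O(\ell)$ ``buckets'' of termination nodes, each bucket being a contiguous block of $O(1)$ nodes at level $\ell$.

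The next step is the probabilistic one. By Lemma~\ref{lem:destinations}, each node $u$ at level $\ell$ is the termination node of $O(1)$ searches in expectation (the weights of the subtrees at level $\ell$ are equal up to constant factors, so destinations distribute evenly among the level-$\ell$ subtrees, and a search started anywhere terminates at level $\ell$ at the node covering its target among those subtrees). Then by linearity of expectation, the expected number of searches terminating in any fixed $O(1)$-size block of level-$\ell$ nodes is $O(1)$. Summing over the $O(\ell)$ scales $j$ gives that the expected number of searches whose horizontal phase routes through $v$ is $O(\ell)$, and since each such search contributes at most one message at $v$ during the horizontal phase (it visits $v$ at most once, as a binary search never revisits a node), the expected congestion contribution of the horizontal phase at $v$ is $O(\ell)$. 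I would also remark that searches can \emph{start} at level $\ell$ at $v$ itself, but that adds only the single search $s_i$ with source $v$, i.e. $O(1)$, and searches arriving vertically from other levels are not counted here — they are handled in Lemma~\ref{lem:vsearch}.

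The main obstacle I anticipate is making precise the claim that only $O(1)$ termination nodes per scale can route a search through $v$. This requires carefully describing the invariant maintained by the simulated binary search: when the search is at $v$ about to take a jump of length $2^{j}$, the target is known to lie in a window of $2^{j+1}$ consecutive level-$\ell$ subtrees anchored at $v$, and it will next restrict to a sub-window of length $2^{j}$; hence the termination node is one of $O(2^{j})$ candidates \emph{at that moment}, but the point is that $v$ is visited at scale $j$ only when the \emph{previous} restriction had length $2^{j+1}$ and $v$ sits at a specific offset — so across all searches that visit $v$ at scale $j$, the reachable termination nodes form $O(1)$ fixed blocks of size $O(1)$ determined by $v$ and $j$ alone, not $O(2^j)$ of them. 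One has to verify this using Lemma~\ref{lem:properties} and the deterministic, position-based nature of the routing tables (the link from a node at a given level to its $i$-th neighbour always spans exactly $2^{i-1}$ positions), which is exactly where the determinism of the $D^2$-tree, as opposed to skip-graph-style randomization, pays off. For the $O(1)$-space (hypernode) variant I would note that by Lemma~\ref{lem:hyper_routing_con_sibling} the routing still proceeds in $\ell$ position-indexed stages with the same blocking structure, so the same counting argument applies verbatim.
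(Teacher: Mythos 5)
Your overall framework (fix a node $v$, decompose the searches passing through $v$ by the stride scale $j$, and use Lemma~\ref{lem:destinations} to charge $O(1)$ expected searches per termination node) is reasonable, but it rests on a combinatorial claim that is false, and you correctly identified it as the main obstacle without resolving it. The claim that ``for each scale $j$ there are only $O(1)$ termination nodes $u$ whose search passes through $v$ via a jump of length $2^j$'' does not hold: once a search arrives at $v$ by a stride of length $2^j$, the remaining strides have lengths $2^{j-1},2^{j-2},\ldots$ and can steer it to \emph{any} of the $2^j$ nodes in the window of width $2^j$ anchored at $v$. Concretely, with nodes numbered $0,\ldots,2^\ell-1$ at level $\ell$, the search from source $0$ to any target in $[2^{\ell-1},2^\ell)$ passes through node $v=2^{\ell-1}$, so the set of termination nodes routed through $v$ at scale $\ell-1$ already has size $2^{\ell-1}$, not $O(1)$. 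With the correct count $\Theta(2^j)$ per scale, your charging scheme yields $\sum_j O(2^j)=O(2^\ell)$, which is useless.

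What rescues the bound is the \emph{source-side} count that your argument omits, and this is exactly how the paper proceeds. Group the sources $s$ by the number $m$ of low-order bits of their level-$\ell$ position that agree with $v$'s (Observation~\ref{obs:access}): a source in class $m$ can reach $v$ only by a stride of length at most $2^m$, and then only $2^m$ of its $O(2^\ell)$ equally likely targets actually route it through $v$, i.e., it hits $v$ with probability $2^m/O(2^\ell)$. Since there are exactly $2^{\ell-m}$ sources in class $m$, each scale contributes $2^{\ell-m}\cdot 2^m/O(2^\ell)=O(1)$ expected hits at $v$, and summing over the $\ell$ scales gives $O(\ell)$. The essential point is a trade-off between the two counts --- many targets per source only for few sources, and vice versa --- so you must count over (source, target) pairs rather than over termination nodes alone; charging via Lemma~\ref{lem:destinations} on the target side cannot see this cancellation. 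Your remarks about searches originating at $v$ and about the hypernode variant are fine, but the core counting step needs to be replaced as above.
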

\begin{proof}
Level $\ell$ contains $O(2^\ell)$ nodes. We number the nodes from left to right by $0,1,\ldots$. A {\em path} from a node $j$ to a node $k$ is the sequence of nodes that we access when we search from node $j$ to find node $k$ at level $\ell$ by using the routing tables. Let $X_{i,j}$ be the random indicator variable defined as follows:
\[
X_{i,j}=\left\{ \begin{array}{c l}
				1 & \mbox{ if node $i$ is in the path that starts from node $j$} \\
				0 & \mbox{ otherwise}
				\end{array}\right.
\]
$X_{i,j}$ is a random variable since node $j$ can choose its target among all nodes at level $\ell$ uniformly at random as implied by Lemma~\ref{lem:destinations}. The following quantity bounds the expected number of paths passing through an arbitrary node $i$ when all searches from nodes at level $\ell$ are accounted for.
\[E\left[ \sum_{j=0}^{O(2^{\ell})}{X_{i,j}}\right] = \sum_{j=0}^{O(2^{\ell})}{E\left[X_{i,j}\right]} \]
%
%
Since $X_{i,j}$ is a random indicator variable it follows that
\[E\left[X_{i,j}\right] = \Pr\left\{X_{i,j}=1\right\}\]
This probability is equal to the number of paths going through $i$ divided by the total number of paths starting from $j$ and ending at all nodes of level $\ell$.
\[\Pr\left\{X_{i,j}=1\right\}=\frac{\mbox{\# of paths passing through $i$ from $j$}}{\mbox{Total number of paths starting at $j$}}\]
The total number of paths starting from $j$ to all nodes of level $\ell$ is equal to the number of target nodes which is $O(2^{\ell})$. Note that we only count the number of search paths as defined by the search procedure between two nodes and not all possible paths.

It is a little trickier to compute the number of paths going through node $i$.
The crucial observation is that the binary representations of the nodes, in their
left to right numbering at level $\ell$, provide
a way to count the number of paths passing through a particular node. Let the
binary representation of node $i$ be $i_{\ell-1}\ldots i_1 i_0$, where $i_{\ell-1}$
is the most significant bit. Then, if there is a link of length $2^{\ell-1}$
between node $i$ and node $j$ it holds that $i_{\ell-2}\ldots i_1 i_0 = j_{\ell-2}\ldots j_1 j_0$. The following observation holds.

\begin{observation}\label{obs:access}
Node $i$ will be accessed by a link of length at most $2^{m}$ in a search path starting from $j$
if $i_{m-1}\ldots i_1 i_0  = j_{m-1}\ldots j_1 j_0$.
\end{observation}
\begin{proof}
This is an implication of the construction of the routing tables as well as from the fact that during searching the sequence of links that are followed are of monotonically decreasing length by powers of $2$.
\qed
\end{proof}

Thus, we have to compute:
\[\frac{1}{O(2^{\ell})}\sum_{j=0}^{O(2^{\ell})}{\left(\mbox{\# of paths passing through $i$ from $j$}\right)} \]
The number of paths that go through $i$ starting from $j$ with destination any node at level $\ell$ can be deduced by using Observation~\ref{obs:access} and the properties of the binary representations. In particular, if the $m$ less significant bits of numbers $i$ and $j$ are equal and $i_m \neq j_m$, then at most $2^m$ paths go through $i$ by Observation~\ref{obs:access}. The number of different nodes $j$ that go through $i$ in this case is $2^{\ell-m}$ since those are the possible numbers that have the $m$ least significant bits the same as $i$. Thus, the previous sum can be expressed by summing over all possible $m$:
\[\frac{1}{O(2^{\ell})} \sum_{m=0}^{\ell-1}{2^{\ell-m}2^m}=\frac{\ell 2^{\ell}}{O(2^{\ell})}=O(\ell)\]
and the lemma follows.
\qed
\end{proof}

The following lemma bounds the congestion due to the vertical search.

\begin{lemma} \label{lem:vsearch}
 The vertical phase of the search starting at level $\ell$ contributes to
 congestion $O(1)$ in expectation at each node in its subtree or on the path to the root.
\end{lemma}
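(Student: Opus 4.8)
The plan is to fix an arbitrary node $x$ at level $k$ and bound the expected number, over the $N$ searches, of those whose vertical phase runs through $x$. Recall that a search launched from a node of level $\ell$ with (uniformly random) target node $z$ has its horizontal phase terminate at the unique level-$\ell$ node $u=u(z,\ell)$ determined by $z$ and $\ell$ -- namely the level-$\ell$ ancestor of $z$ when $z$ lies at level $\ge\ell$, and otherwise the rightmost level-$\ell$ node of the subtree rooted at the left child of $z$ (the node "just before" $z$'s range in the inorder layout) -- after which the vertical phase follows the tree path from $u$ to $z$. This path is monotone: it either only descends or only ascends. So for the fixed $x$ the searches touching $x$ vertically split into a \emph{descending} group (the downward path from $u$ to $z$ passes through $x$) and an \emph{ascending} group (the upward path from $u$ to $z$ passes through $x$), and I bound each separately; only $O(1)$ auxiliary nodes (such as the inorder-adjacent leaf of $u$) are additionally touched per search, and each such node is the landing point of $O(1)$ searches in expectation by exactly the counting of Lemma~\ref{lem:destinations}.

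For the descending group, the downward path from $u$ to $z$ passes through $x$ precisely when the starting level $\ell$ is at most $k$ and $z$ lies in the subtree of $x$: in that case $u=u(z,\ell)$ is the level-$\ell$ ancestor of $x$, and the top-down search from $u$ to $z$ visits $\ldots,x,\ldots,z$. Since each of the $N$ nodes issues one search with target uniform over the $N$ nodes, and the PBT is perfect, the expected size of this group is $\big(\sum_{\ell=0}^{k}2^{\ell}\big)\cdot\frac{|x|}{N}=\Theta(2^{k})\cdot\frac{|x|}{N}$. Here I invoke the balance of the overlay: by Invariant~\ref{inv:criticality} together with the uniform $\Theta(\log N)$ bucket sizes maintained by the extension/contraction operations, $|x|=\Theta(N/2^{k})$ for every level-$k$ node, so the descending group has expected size $O(1)$.

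For the ascending group, the upward path from $u$ to $z$ passes through $x$ only when $\ell>k$, $z$ is a weak ancestor of $x$, and $u=u(z,\ell)$ lies inside the subtree of $x$. The crucial point is that the inorder layout rigidly pins $u(z,\ell)$: for a strict ancestor $z$ of $x$, $u(z,\ell)$ is the rightmost level-$\ell$ node of $z$'s left subtree, whose level-$k$ ancestor is a single fixed node; hence $u(z,\ell)$ can lie inside the subtree of $x$ for at most one strict ancestor $z$ of $x$ (the unique $z$, if any, such that $x$ is the rightmost level-$k$ node of $z$'s left subtree), together with $z=x$ itself. Thus at most two targets $z$ are relevant, and for each the expected number of searches starting at levels deeper than $k$ and hitting that target is $\big(\sum_{\ell=k+1}^{H}2^{\ell}\big)\cdot\frac1N=O(2^{H}/N)=O(1)$, since $N=\Omega(2^{H})$. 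Adding the descending and ascending bounds yields expected vertical congestion $O(1)$ at $x$; the same argument, applied inside a single bucket (which has $\Theta(\log N)$ nodes and is scanned as a list), covers the case where $x$ or $z$ lies in a bucket.

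I expect the main obstacle to be the ascending case: one must extract, from the construction of the routing tables and the left-to-right inorder numbering, that the horizontal endpoint is a deterministic function of the target and the level and that it can land inside the subtree of $x$ for only $O(1)$ of the $\Theta(\log N)$ ancestors of $x$ -- without this rigidity a naive sum over all ancestors gives only an $O(\log N)$ bound. A secondary difficulty is pinning $|x|$ to $\Theta(N/2^{k})$ tightly in the descending case (rather than merely within a $[1/4,3/4]^{k}$ fraction of $N$), which is exactly where Invariant~\ref{inv:criticality} and the bucket-size control are genuinely needed.
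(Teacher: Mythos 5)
Your proposal is correct and follows essentially the same route as the paper: fix a node, split the vertical traffic into searches descending into it (target in its subtree, starting level above it) and searches ascending through it (target a weak ancestor, horizontal endpoint in its subtree), and bound each expected count using the uniformly random targets together with the approximate uniformity of subtree sizes supplied by the balancing invariants. Your ascending-case count is in fact sharper than the paper's --- you pin the horizontal endpoint $u(z,\ell)$ exactly and show only $O(1)$ ancestor-targets can route through the fixed node, where the paper crudely charges all $\ell$ ancestors as possible targets --- but both countings give $O(1)$, and both arguments lean on the same assumed $|x|=\Theta(N/2^{k})$ uniformity that you correctly flag as the delicate point.
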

\begin{proof}
By Lemma~\ref{lem:destinations}, only an expected $O(1)$ number of searches will stop at any node due to the horizontal search phase. Assume a node $u$ at level $\ell$. This node has $\ell-1$ ancestors and $2^{H-\ell}$ descendants. Thus, in total at most $O(2^{H-\ell}+\ell)$ searches in expectation can affect node $u$. We start by investigating how ancestors affect node $u$. The ancestor at level $\ell-1$ can choose between two children, the one of which is $u$, as well as from its path of ancestors. Thus, the probability of choosing $u$ is $O\left(\frac{2^{H-\ell}}{2^{H-\ell+1}+\ell-2}\right)$. In general, the probability of node $z$ at level $\ell'<\ell$ going through $u$ is $O\left(\frac{2^{H-\ell}}{2^{H-\ell'}+\ell'-1}\right)$. Thus, the expected number of searches going through $u$ due to its ancestors is
\begin{equation}
\label{eq:v-anc}
 \sum_{\ell'=1}^{\ell}{O\left(\frac{2^{H-\ell}}{2^{H-\ell'}+\ell'-1}\right)} = O\left(2^{H-\ell}\sum_{\ell'=1}^{\ell}{\frac{1}{2^{H-\ell'}}}\right) = O(1)
\end{equation}
 Now we move to the descendants of $u$. The probability that the leaves of the subtree of $u$ go through $u$ during a search is $O\left(\ell\frac{2^{H-\ell}}{n}\right)$. This is because the probability of choosing any node as a destination node of the search operation is $\frac{1}{n}$, the number of leaves is $O(2^{H-\ell})$ and there are $\ell$ nodes in total from $u$ to the root. Similarly, for the $i$-th level, $i>\ell$, the probability of going through $u$ is $O\left(\ell\frac{2^{H-\ell-i}}{n} \right)$. Thus, in total we get that the expected number of searches going through $u$ from its descendants is
\begin{equation}
\label{eq:v-desc}
 \sum_{i=0}^{\ell-1}{O\left(\ell\frac{2^{H-\ell-i}}{n} \right)} = O\left(\frac{\ell2^{H-\ell+1}}{n}\right)= O\left(\frac{\ell2^{H-\ell+1}}{2^{H-1}}\right)= O\left(\frac{\ell}{2^{\ell-2}}\right) = O(1)
\end{equation}
By (\ref{eq:v-anc}) and (\ref{eq:v-desc}) we get the lemma.
 \qed
\end{proof}

The following theorem establishes the congestion bound.

\begin{theorem}
\label{thm:congestion}
 The (expected) congestion due to the search operations is $O\left(\frac{\log{N}}{N}\right)$
 in a $D^2$-tree with $N$ nodes, when each node uses $O(\log{N})$ space.
\end{theorem}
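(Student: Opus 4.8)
The plan is to assemble the congestion bound directly from the two preceding lemmata, which already isolate the two phases of a search. Recall that congestion is defined as the maximum, over all nodes, of the expected number of searches (out of the $N$ issued, one per node) that pass through that node, divided by the total number $N$ of searches. So it suffices to show that every node $u$ is traversed by $O(\log N)$ searches in expectation; dividing by $N$ then yields $O(\frac{\log N}{N})$.

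First I would fix an arbitrary node $u$ at level $\ell$ and bound separately the contributions of the horizontal and vertical phases of all $N$ searches. For the horizontal phase, Lemma~\ref{lem:hsearch} states that the searches initiated at level $\ell$ contribute $O(\ell)$ in expectation to $u$. But $u$ can also be traversed horizontally by searches that started at a deeper level $\ell' > \ell$ and climbed up to level $\ell$ before beginning their horizontal walk; however, by Lemma~\ref{lem:properties}(i) a horizontal search path at level $\ell'$ projects onto a horizontal search path at level $\ell'-1$ (the parents of the visited nodes), so the number of level-$\ell'$ searches passing horizontally through $u$ is dominated by the number passing through an ancestor of $u$ at that level, which is again $O(\ell') = O(\log N)$ in expectation by the same lemma applied at level $\ell'$. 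Summing a geometric-type argument, or simply noting that $u$ lies on only one horizontal level and the relevant searches are those whose target is an ancestor/descendant region, the total horizontal contribution at $u$ is $O(\log N)$.

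Next I would handle the vertical phase: Lemma~\ref{lem:vsearch} says that the vertical phase of searches starting at any single level $\ell'$ contributes only $O(1)$ in expectation to $u$, whether $u$ lies on the root-path of those searches or inside their subtree. Since there are $H = O(\log N)$ levels, summing over all starting levels $\ell'$ gives a total vertical contribution of $O(\log N)$ at $u$. Adding the horizontal $O(\log N)$ and vertical $O(\log N)$ contributions, every node is hit by $O(\log N)$ searches in expectation; dividing by the total number $N$ of searches gives congestion $O(\frac{\log N}{N})$, as claimed. The $O(\log N)$ space assumption enters only through the use of the full routing tables in Lemmata~\ref{lem:hsearch} and \ref{lem:vsearch}.

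The main obstacle I anticipate is the bookkeeping in the horizontal phase: Lemma~\ref{lem:hsearch} as stated bounds only the contribution of searches that both start and perform their horizontal walk at level $\ell$, so I must argue carefully that searches entering level $\ell$ from below do not overload $u$. The cleanest route is the projection argument via Lemma~\ref{lem:properties}(i): each such search, when it is walking horizontally at some level $\ell'' \le \ell$, contributes to $u$ only if $u$ (or more precisely the node at level $\ell''$ that it is then visiting, which must be an ancestor of $u$) lies on its path, and the expected number of such paths through any fixed level-$\ell''$ node is $O(\ell'')$; since each search walks horizontally at exactly one level on its way, and $u$ has only $H = O(\log N)$ ancestors, the contributions telescope to $O(\log N)$ total. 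One must also confirm that the vertical and horizontal phases do not double-count a traversal of $u$, but since each search visits $u$ at most a constant number of times overall (it enters the level of $u$ once and descends/ascends monotonically), this only affects constants.
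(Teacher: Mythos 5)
Your proposal is correct and follows essentially the same route as the paper, which simply combines Lemmata~\ref{lem:hsearch} and \ref{lem:vsearch} to conclude that $O(\log N)$ searches in expectation traverse any node and then divides by the $N$ searches issued. The extra bookkeeping you supply for searches ``climbing up before their horizontal walk'' is harmless but moot, since the search algorithm performs its horizontal phase only at the level of the initiating node before the vertical phase begins.
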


\begin{proof}
By Lemmata~\ref{lem:hsearch} and \ref{lem:vsearch}, we deduce that $O(\log{N})$ searches in
expectation will go through each node of the tree. Since the tree has $N$ nodes, the
theorem is established.
\qed
\end{proof}

The following theorem extends Theorem~\ref{thm:congestion} by using $O(1)$ space per node.

\begin{theorem} \label{thm:congestion_space}
The (expected) congestion due to the search operations is $O(\frac{\log{N}}{N})$
in a $D^2$-tree with $N$ nodes, where each node uses $O(1)$ space.
\end{theorem}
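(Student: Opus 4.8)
The plan is to reduce the $O(1)$-space case to the already-established $O(\log N)$-space case (Theorem~\ref{thm:congestion}) by tracking how the hypernode construction redistributes the congestion. Recall that in the $O(1)$-space scheme each routing table that was previously stored in a single node $v$ is now spread across the $O(\ell)$ nodes of the hypernode containing $v$ at level $\ell$. The horizontal search, analysed in Lemma~\ref{lem:search}, still follows the \emph{same} search paths between nodes at a given level: the only difference is that a single hop of length $2^i$ that previously cost one message now costs $O(1)$ messages routed through nodes of rank $i$ inside the relevant hypernodes. So the first step is to argue that the total number of search paths passing through any \emph{hypernode} at level $\ell$ is, up to constant factors, the same expected quantity $O(\ell)$ bounded in Lemma~\ref{lem:hsearch}, since a hypernode has $O(\ell)$ nodes and the per-node bound there was already $O(\ell)$; aggregating over a hypernode gives $O(\ell^2)$ paths through it in expectation.

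Next I would distribute this hypernode load across its constituent nodes. The key observation is that a search path enters a hypernode only to traverse it via the backward exponential links, and each such traversal touches each rank at most $O(1)$ times (by the monotone-decreasing-length property used in Observation~\ref{obs:access} and in the proof of Lemma~\ref{lem:search}). Hence the $O(\ell^2)$ expected paths through a level-$\ell$ hypernode translate into $O(\ell)$ expected visits to each individual node of that hypernode — exactly matching the per-node horizontal bound of Lemma~\ref{lem:hsearch}. For the vertical phase, nothing essential changes: the vertical search walks a root-to-leaf path exactly as before (the last sentence of the proof of Lemma~\ref{lem:search} already notes this), so Lemma~\ref{lem:vsearch} applies verbatim and contributes $O(1)$ expected congestion per node. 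Combining the horizontal $O(\ell)=O(\log N)$ and vertical $O(1)$ contributions, every node still receives $O(\log N)$ searches in expectation.

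Finally, since the overlay still has $N$ nodes and a total of $\Theta(N)$ searches are issued (one per node), dividing the maximum expected per-node load $O(\log N)$ by the total number $N$ of searches yields the stated congestion bound $O\!\left(\frac{\log N}{N}\right)$, identical to Theorem~\ref{thm:congestion}.

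I expect the main obstacle to be the second step: making rigorous the claim that a search path traversing a hypernode hits each rank $O(1)$ times and that the expected $O(\ell^2)$ hypernode load spreads \emph{evenly} over its $\ell$ nodes rather than concentrating on, say, the highest-rank node (which every entering path must consult first). One must check that the highest-rank node of a hypernode is not itself a hotspot — i.e. that the number of hypernodes whose searches route through a given high-rank node is still controlled — using Lemma~\ref{lem:hyper_routing_con_sibling} and the fact that each level-$\ell$ node's routing responsibilities mirror those of its siblings, so no single rank accumulates more than an $O(1/\ell)$ fraction of the hypernode's total path load. Once that even-spreading is justified, the rest is a routine re-run of the computations in Lemmata~\ref{lem:hsearch} and~\ref{lem:vsearch}.
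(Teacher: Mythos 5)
Your proposal follows essentially the same route as the paper's own (explicitly sketched) proof: keep Lemma~\ref{lem:destinations} and the vertical-phase bound of Lemma~\ref{lem:vsearch} verbatim, redo the horizontal-phase accounting of Lemma~\ref{lem:hsearch} around the hypernode structure, and conclude that the maximum expected per-node load is $O(\log N)$, giving congestion $O(\log N/N)$. The only real difference is bookkeeping: you bound the total load on a level-$\ell$ hypernode by $O(\ell^2)$ and then argue it spreads evenly over its $O(\ell)$ members, whereas the paper argues directly per node, treating the highest-rank node separately and claiming (via the Lemma~\ref{lem:hsearch}-style probability count) that the remaining ranks each receive $O(1)$ expected visits. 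On the obstacle you flag: your worry that \emph{every} entering path must first consult the highest-rank node is unfounded --- by construction the rank-$i$ node of a hypernode $V$ links to the rank-$i$ node of the target hypernode $V'$, so external paths enter $V'$ directly at the rank matching the link length they followed; only the $O(\ell)$ searches \emph{initiated inside} a hypernode go to its highest-rank node first, which is exactly the $O(\ell)$ term the paper charges to that node and is within the $O(\log N)$ budget. With that clarification your even-spreading step reduces to the same (still informal) per-rank counting the paper performs, so your argument is no less rigorous than the published one.
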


\begin{proof}
This proof is very similar to the proof of Theorem~\ref{thm:congestion}
and we simply sketch it. Lemma~\ref{lem:destinations} still holds.
Searching is again divided into two phases. The vertical search phase
is identical to the one in 
Lemma~\ref{lem:vsearch} and hence this lemma still holds.
However, horizontal search has slightly changed and Observation~\ref{obs:access} is not valid anymore.
First, the search always starts from the highest rank node in a hypernode $V$ which
results in $O(\ell)$ accesses from the searches that start from all nodes of $V$. From this point and on, the horizontal search is similar to the
one of 
Lemma~\ref{lem:hsearch}. The proof that the congestion remains optimal is a result of the following
similar argument to Lemma~\ref{lem:hsearch}. The probability that a node $i$ will be part of the search path which
starts at node $j$ is large for very few nodes $j$ among the $O(2^{\ell})$ such possible nodes. Most of the nodes
have a very small probability of using node $i$, since node $i$ can be accessed after $O(\ell)$ steps. This follows
directly from Lemma~\ref{lem:destinations}. Using this fact and the fact that highest rank nodes have
at least $O(\ell)$ accesses we are driven to the conclusion that the expected bound on the number of accesses
to nodes of level $\ell$ due to the horizontal search is $O(1)$ and the theorem follows.
\qed
\end{proof}

\subsubsection{Fault Tolerance.}
If a node $v$ discovers (during the execution of an operation) that node $u$ is unreachable, then it contacts a
sibling of $u$ through the routing tables of the siblings of $v$ (by making use of Lemma \ref{lem:properties}(ii)).
This sibling of $u$ is able by Lemma~\ref{lem:properties}(ii) (or Lemma~\ref{lem:hyper_routing_con_sibling}) to
reconstruct all links of node $u$ and a node departure for $u$ is initiated, which resolves this failure.

Searches and updates in the $D^2$-tree do not tend to favour any node, and in particular nodes near the root.
This is a direct consequence of the way the search operation is implemented by first moving horizontally
at the same level as the node that initiated the search and then by moving vertically (see Theorem~\ref{thm:congestion_space}).
As a result, near to root nodes are not crucial and their failure will not cause more problems than the failure of any node.
However, a single node can be easily disconnected from the overlay simply when all nodes with which it is connected fail.
This means that 4 failures (two adjacent nodes and two children) are enough to disconnect the root (recall that the routing
table of the root is empty). For the $O(1)$ space per node solution, a $O(1)$ number of failures is enough to disconnect any node.
For the $O(\log{N})$ space per node solution, a node at level $\ell$ can be disconnected after $O(\ell)$ failures in the worst-case.

When routing tables have $O(\log{N})$ size, to disconnect a group of $k$ nodes at least $k$ failures must happen.
The most easily disconnected nodes are those which are near the root since their routing tables are small in size.
Thus, they can be disconnected by simply letting their respective adjacent nodes (which are leaves) fail which
provides the bound. When routing tables have $O(1)$ size, fault tolerance is naturally deteriorated. When
the representative of a bucket fails then the leftmost node among the nodes of the bucket replaces it, initiating
a departure operation.

\subsubsection{Main Result.}
We are now ready for the main result of this work.

\begin{theorem} \label{thm:bb_space}
A $D^2$-tree overlay with $N$ nodes and $n$ data elements residing on them achieves:
(i) $O(1)$ space per node;
(ii) deterministic $O(\log N)$ searching cost;
(iii) deterministic amortized $O(\log{N})$ update cost both for element update and for node joins and departures;
(iv) \emph{optimal} congestion of $O\left(\frac{\log{N}}{N}\right)$ expected cost;
(v) deterministic amortized $O(\log{n})$ bound for load-balancing.
The $D^2$-tree overlay supports ordered data queries optimally, and tolerates
node failures.
\end{theorem}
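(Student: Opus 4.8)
The plan is to prove Theorem~\ref{thm:bb_space} by collecting the ingredients established in Sections~\ref{sec:load} and~\ref{sec:d2} and checking, claim by claim, that the hypotheses of each cited result are met by the concrete overlay. The one preliminary step I would carry out once and for all is to verify that the $D^2$-tree satisfies the three abstract requirements of the general load-balancing machinery of Section~\ref{sec:load}. The PBT part of the overlay is a binary tree of height $H=O(\log(N/\log N))=O(\log N)$; a node at level $\ell$ of the PBT roots a subtree of $2^{H-\ell}$ buckets, each of $\Theta(\log N)$ nodes, so Invariant~\ref{inv:criticality} makes two brothers have sizes within a constant factor of one another, while a node at height $h$ has size $\Theta(2^{h}\log N)=\Omega(h^4)$; and every join, departure, or element update is resolved at a leaf or bucket reachable in $O(1)$ messages from an internal node, since by Lemma~\ref{lem:properties}(iii) each internal node has a leaf as an inorder neighbour. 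With this in place, Lemma~\ref{lem:rebalancing}, Theorem~\ref{thm:wupdate}, Theorem~\ref{thm:redistcost}, and Lemma~\ref{lem:weight} apply both to the size information stored on the PBT and to the weight (element-count) information used by the index.

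For parts (i)--(iii): part (i) is the hypernode construction, in which the $O(\ell)$-entry routing table of a level-$\ell$ node is spread over $\Theta(\ell)$ constant-degree nodes and kept up to date via Lemma~\ref{lem:hyper_routing_con_sibling}, so every node stores $O(1)$ links; part (ii) is precisely Lemma~\ref{lem:search}, proved there for both the $O(\log N)$-space and the $O(1)$-space variants; and for part (iii) an element update costs a search ($O(\log N)$ by Lemma~\ref{lem:search}) plus a redistribution of amortized cost $O(\log N)$ by Lemma~\ref{lem:complrebal}, while a node join or departure costs $O(\log N)$ to reach the adjacent leaf, splice the node into or out of a bucket, and rebuild the affected routing tables, plus an amortized $O(1)$ lazy size update by Theorem~\ref{thm:wupdate}, plus the node-redistribution, whose amortized cost is $O(H f(N)/N)=O(H)=O(\log N)$ by Lemma~\ref{lem:migrate} with $f(|v|)=O(|v|)$; the rare extension/contraction triggered at the root is, by Lemma~\ref{lem:weight}, preceded by $\Omega(N)$ joins/departures and hence folds into the same bound.

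For parts (iv)--(v) and the closing remarks: part (iv) is Theorem~\ref{thm:congestion_space} (and Theorem~\ref{thm:congestion} for the $O(\log N)$-space variant), which combines Lemma~\ref{lem:hsearch} ($O(\log N)$ expected horizontal accesses per node; the highest-rank-node argument in the hypernode case) with Lemma~\ref{lem:vsearch} ($O(1)$ expected vertical accesses per node) to conclude that $N$ random searches hit each of the $N$ nodes $O(\log N)$ times in expectation, i.e.\ congestion $O(\log N/N)$, which is optimal; part (v) is Lemma~\ref{lem:complrebal}, giving amortized $O(\log N)$, hence $O(\log n)$ since $N\le n$, with the sole caveat that a node-redistribution in the overlay must not disturb an element-redistribution in the index, which is handled by performing the element-redistribution of a subtree before any node-redistribution of that subtree. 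Finally, a range query $[a,b]$ costs $O(\log N)$ for the initial search plus $O(k)$ to follow inorder-adjacency links over the $k$ reported elements, which is optimal, and the fault-tolerance claim is exactly the Fault Tolerance discussion: an unreachable node is detected during an operation and its links are reconstructed from a sibling by Lemma~\ref{lem:properties}(ii) (or Lemma~\ref{lem:hyper_routing_con_sibling}), after which a node departure repairs the overlay.

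The routine part is the list of citations; the one place that genuinely requires care is the preliminary step -- confirming that the abstract hypotheses of Section~\ref{sec:load}, especially the $\Omega(i^4)$ lower bound on the weight of a level-$i$ node and the near-equality of brothers' weights, are in fact guaranteed by Invariant~\ref{inv:criticality} and the $\Theta(\log N)$ bucket size for each tree to which the machinery is applied (the PBT with sizes and the index with element counts), and verifying that the two amortized redistribution arguments compose with the lazy weight-updating argument without any potential being charged twice. I expect this bookkeeping, rather than any individual estimate, to be the real obstacle.
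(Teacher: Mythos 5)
Your proposal is correct and takes essentially the same route as the paper: the paper's proof of Theorem~\ref{thm:bb_space} is exactly the assembly of citations you give (space by the hypernode construction, search by Lemma~\ref{lem:search}, joins/departures by Lemma~\ref{lem:migrate} with $f(N)=O(N)$, congestion by Theorem~\ref{thm:congestion_space}, load balancing by Lemma~\ref{lem:complrebal}). Your preliminary verification that the $D^2$-tree meets the abstract hypotheses of Section~\ref{sec:load} (brother weights within constant factors, the $\Omega(i^4)$ bound, $O(1)$ access to a leaf) is left implicit in the paper but is a sound and worthwhile addition.
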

\begin{proof}
Space usage is $O(1)$ by construction. The search cost follows from Lemma~\ref{lem:search}.
Node join and departures are $O(\log{N})$ amortized by Lemma~\ref{lem:migrate} and the fact
that $f(n)=O(N)$. The congestion bound comes from Theorem~\ref{thm:congestion_space}.
Finally, the load-balancing bound comes from Lemma~\ref{lem:complrebal}.
\qed
\end{proof}

\REMOVED{
\subsection{Applications}

The balancing scheme can be applied straightforwardly to BATON \cite{jov05}. BATON is a balanced tree-like overlay
that satisfies the specifications set in the Introduction. The same goes also for Skip Graphs \cite{as03} with
the exception that the specifications hold probabilistically and thus the bounds are also probabilistic.

We provide a technique that lazily updates the weights on the nodes of a tree. This technique is interesting by
itself and can be straightforwardly applied to weighted balanced trees \cite{av03} in the Pointer Machine model
of computation for single processor internal memory machines. In this manner, the update of balancing information
is supported in $O(1)$ amortized time instead of $O(\log{n})$ as it was known until now.
}

\section{Discussion and Future Work}
\label{sec:conclusion}

Our load-balancing scheme (Section~\ref{sec:load}) can be applied straightforwardly to BATON \cite{jov05}. BATON is a balanced tree-like overlay
that satisfies the specifications set in the Introduction. The same goes also for Skip Graphs \cite{as03} with
the exception that the specifications hold probabilistically and thus the bounds are also probabilistic. Additionally, it
provides a mechanism to control the bucket size of \cite{akk04}.

We provide a technique that lazily updates the weights on the nodes of a tree (Section~\ref{ssec:rebalance}). This technique is interesting by
itself and can be straightforwardly applied to weighted balanced trees \cite{av03} in the Pointer Machine model
of computation for single processor internal memory machines. In this manner, the update of balancing information
is supported in $O(1)$ amortized time, an improvement over the currently best known bound of $O(\log n)$.

Future work includes the extension of the load-balancing mechanism to accommodate weighted
elements (weights representing preference). Additionally, the load balancing mechanism provides amortized complexities which results in the existence of very few indeed but very costly rebalancing operations (imagine the root being redistributed). To fully tackle the existence of churn, one needs to come up with worst-case complexities for the load balancing mechanism. Note that \textit{churn} is the collective effect created by independent burstly arrivals and departures of nodes.

With respect to the overlay, future work includes tackling multidimensional data, integrating the network topology with the overlay topology as well as taking into account locality of reference.

It is also an open problem the application of the proposed balancing scheme to the BATON$^*$ \cite{JOTVZ06}
structure (the latest version of BATON), where the overlay structure is a tree with height $O(\log_{m}{N})$ with each node having $O(m)$ children.

Finally, the mechanisms we provide require extensive experimental verification.

\end{document}